\journal{Artificial Intelligence}
\newcommand\bcmdtab{\noindent\bgroup\tabcolsep=0pt%
  \begin{tabular}{@{}p{10pc}@{}p{20pc}@{}}}
\newcommand\ecmdtab{\end{tabular}\egroup}
\newproof{proof}{Proof}
\newtheorem{theorem}{Theorem}[section]
\newtheorem{proposition}[theorem]{Proposition}
\newtheorem{remark}{Remark}
\newtheorem{lemma}[theorem]{Lemma}
\theoremstyle{plain}
\newtheorem{definition}[theorem]{Definition}
\newtheorem{example}[theorem]{Example}
\newcommand{\mm}{\mathrm}
\newcommand{\mr}{\mathscr}
\newcommand{\mb}[1]{\mathbf{#1}}
\newcommand{\mbb}{\mathbb}
\newcommand{\yref}{\prettyref}
\newcommand{\seq}{\subseteq}
\newcommand{\sneq}{\subsetneq}
\newcommand{\0}{\emptyset}
\newcommand{\la}[1][]{\ensuremath{\xleftarrow{#1}}}
\newcommand{\Iff}{\quad\Longleftrightarrow\quad}
\newcommand{\ioff}{if, and only if, }
\newcommand{\sm}{-}
\newcommand{\naf}{{\mathord{\sim}}}
\newcommand{\cups}{\cup\ldots\cup}
\newcommand{\lr}{_{[\ell,r]}}
\newcommand{\lfp}{{\rm lfp\,}}
\renewcommand{\a}{\alpha}
\renewcommand{\b}{\beta}
\newcommand{\supp}{\mm{supp}\,}
\newcommand{\modG}{\models_\Gamma}
\renewcommand{\IJ}{(\mb I,\mb J)}
\renewcommand{\max}{\mm{max}}
\renewcommand{\P}{\mr P}
\newcommand{\R}{\mr R}
\newcommand{\Mt}{\mm M_{\mb I,t}}
\newcommand{\MMt}{\mm{MM}_{\mb I,t}}
\newcommand{\Tp}[1]{\mm T_{\P,\mb D,#1}}
\newcommand{\Fp}[1]{\Phi_{\P,\mb D,#1}}
\newcommand{\ibt}{_{\mb I,t}}
\newcommand{\mbt}{_{\mb M,t}}
\renewcommand{\t}{^\dagger}
\newcommand{\It}{^{\mb I,t}}
\newcommand{\ITt}{^{\mb I,T,t}}
\newcommand{\HP}{\mm H(\P\It)}
\newcommand{\lra}[1]{\stackrel{\rho_{#1}}{\longleftarrow}}
\renewcommand{\Gamma}{B}
\begin{document}

\begin{frontmatter}

\title{Fixed Point Semantics for Stream Reasoning}

\author{Christian Anti\'c}
\address{Institute of Discrete Mathematics and Geometry\\
         Vienna University of Technology\\
         Wiedner Hauptstra\ss e 8-10, A-1040 Vienna, Austria}
\ead{christian.antic@icloud.com}

\begin{abstract} Reasoning over streams of input data is an essential part of human intelligence. During the last decade {\em stream reasoning} has emerged as a research area within the AI-community with many potential applications. In fact, the increased availability of streaming data via services like Google and Facebook has raised the need for reasoning engines coping with data that changes at high rate. Recently, the rule-based formalism {\em LARS} for non-monotonic stream reasoning under the answer set semantics has been introduced. Syntactically, LARS programs are logic programs with negation incorporating operators for temporal reasoning, most notably {\em window operators} for selecting relevant time points. 
Unfortunately, by preselecting {\em fixed} intervals for the semantic evaluation of programs, the rigid semantics of LARS programs is not flexible enough to {\em constructively} cope with rapidly changing data dependencies. Moreover, we show that defining the answer set semantics of LARS in terms of FLP reducts leads to undesirable circular justifications similar to other ASP extensions. This paper fixes all of the aforementioned shortcomings of LARS. More precisely, we contribute to the foundations of stream reasoning by providing an operational fixed point semantics for a fully flexible variant of LARS and we show that our semantics is sound and constructive in the sense that answer sets are derivable bottom-up and free of circular justifications.
\end{abstract}

\begin{keyword} Dynamic Data; Answer Set Programming; Stream Reasoning
\end{keyword}

\end{frontmatter}

\section{Introduction}

Reasoning over streams of input data is an essential part of human intelligence. During the last decade {\em stream reasoning} has emerged as a research area within the AI-community with many potential applications, e.g., web of things, smart cities, and social media analysis (cf. \cite{Valle09,Mileo17,DellAglio17}). In fact, the increased availability of streaming data via services like Google and Facebook has raised the need for reasoning engines coping with data that changes at high rate.


Logic programs are rule-based systems with the rules and facts being written in a sublanguage of predicate logic extended by a unary operator ``$\naf$'' denoting {\em negation-as-failure} (or {\em default negation}) \citep{Clark78}. While each monotone (i.e., negation-free) logic program has a unique least Herbrand model (with the least model semantics \citep{vanEmden76} being the accepted semantics for this class of programs), for general logic programs a large number of different purely declarative semantics exist. Many of it have been introduced some 20 years ago, among them the {\em answer set semantics} \citep{Gelfond91} and the {\em well-founded semantics} \citep{vanGelder91}. The well-founded semantics, because of its nice computational properties (computing the unique well-founded model is tractable), plays an important role in database theory. However, with the emergence of efficient solvers such as DLV \citep{Leone06}, Smodels \citep{Simons02}, Cmodels \citep{Giunchiglia06}, and Clasp \citep{Gebser12}, programming under answer set semantics led to a predominant declarative problem solving paradigm, called {\em answer set programming} (or {\em ASP}) \citep{Marek99,Lifschitz02}. Answer set programming has a wide range of applications and has been successfully applied to various AI-related subfields such as planning and diagnosis (for a survey see \cite{Brewka11,Eiter09,Baral03}). Driven by this practical needs, a large number of extensions of classical answer set programs have been proposed, e.g. {\em aggregates} (cf. \cite{Faber04,Faber11,Pelov04}), {\em choice rules} \citep{Niemela99}, {\em dl-atoms} \citep{Eiter08a}, and general {\em external atoms} \citep{Eiter05}. For excellent introductions to the field of answer set programming we refer the reader to \cite{Brewka11,Baral03,Eiter09}.

\cite{Beck18} introduced LARS, a Logic-based framework for Analytic Reasoning over Streams, where the semantics of LARS has been defined in terms of FLP-style answer sets \citep{Faber11}. Syntactically, LARS programs are logic programs with negation as failure incorporating operators for temporal reasoning, most notably {\em window operators} for selecting relevant time points. 
Unfortunately, by preselecting {\em fixed} intervals for the semantic evaluation of programs, the rigid semantics of LARS programs is not flexible enough to constructively cope with rapidly changing data dependencies. For example, sentences of the form ``$a$ holds at time point $t$ if $b$ holds at every {\em relevant} time point'' are not expressible within LARS (cf. \yref{exa:Box_b}), as the interval of `relevant' time points changes dynamically, whereas LARS preselects a static interval. Our first step therefore is to refine and simplify \cite{Beck18}'s semantics in \yref{sec:Refined} by employing {\em dynamic} time intervals.

Extensions of the answer set semantics adhere to minimal models or, even more restricting, to models free of unfoundedness. However, FLP-answer sets of stream logic programs may permit undesirable circular justifications similar to other ASP extensions (cf. \cite{Shen14,Antic13}). Fixed point semantics of logic programs (cf. \cite{Fitting02}), on the other hand, are constructive by nature, which suggests to define a fixed point semantics for stream logic programs targeted for foundedness, by recasting suitable operators in such a way that the FLP semantics can be reconstructed or refined, in the sense that a subset of the respective answer sets are selected (sound ``approximation''). The benefit is twofold: by coinciding semantics, we get operable fixed point constructions, and by refined semantics, we obtain a sound approximation that is constructive. For this we recast two well-known fixed point operators from ordinary to stream logic programs, namely the van Emden-Kowalski operator \citep{vanEmden76} and the Fitting operator \citep{Fitting02}. This task turns out to be non-trivial due to the intricate properties of windows \citep{Arasu06,Beck18} and other modal operators occurring in rule heads. We show that the so obtained operators inherit the following characteristic properties: models of a program are characterized by the prefixed points of its associated van Emden-Kowalski operator, and the Fitting operator is monotone with respect to a suitable ordering which guarantees the existence of certain least fixed points, namely the so obtained constructive answer sets. We then show the constructiveness of our fixed point semantics in terms of level mappings \citep{Shen14}. Specifically, we prove that our semantics captures those answer sets which possess a level mapping or, equivalently, which are free of circular justifications, which is regarded as a positive feature.

The rest of the paper is structured as follows. In \yref{sec:LARS} we define the syntax and semantics of stream logic programs first in the vein of \cite{Beck18} (\yref{sec:Beck18}) followed by our refined semantics in \yref{sec:Refined}. \yref{sec:operators} and \ref{sec:semantics} constitute the main part of the paper. More precisely, in \yref{sec:MMt} we define a novel (partial) model operator for the evaluation of rule heads, and in \yref{sec:Tp} and \ref{sec:Fp} we recast the well-known van Emden-Kowalski operator $\mm T_\P$ and the Fitting operator $\Phi_\P$ from ordinary to stream logic programs and prove some non-trivial properties. In \yref{sec:semantics} we then define a fixed point semantics for stream logic programs in terms of the (extended) Fitting operator and prove in our Main \yref{thm:main} the soundness of our approach. Afterwards, in \yref{sec:level} we characterize our semantics in terms of level mappings and conclude that our semantics is sound, constructive, and free of circular justifications.

\section{Stream Logic Programs}\label{sec:LARS}

We denote the set $\mbb N\cup\{\infty\}$ by $\mbb N^\infty$. A {\em partially ordered set} (or {\em poset}) is a pair $\langle L,\leq\rangle$ where $L$ is a set and $\leq$ is a reflexive, antisymmetric, and transitive binary relation on $L$. A {\em lattice} is a poset $\langle L,\leq\rangle$ where every pair of elements $x,y\in L$ has a unique greatest lower bound and least upper bound in $L$. We call $\langle L,\leq\rangle$ {\em complete} if every subset has a greatest lower bound and a least upper bound. For any two elements $x,y\in L$, we define the {\em interval} $[x,y]=\{z\in L\mid x\leq z\leq y\}$. Given a mapping $f:L\to L$, we call $x\in L$ a {\em prefixed point} of $f$ if $f(x)\leq x$, and we call $x$ a {\em fixed point} of $f$ if $f(x)=x$. Moreover, we call $f$ {\em monotone} if $x\leq y$ implies $f(x)\leq f(y)$, for all $x,y\in L$. In case $f$ has a {\em least fixed point}, we denote it by $\lfp f$. Moreover, for a mapping $g:L\times L\to L$ we denote by $g(\,.\,,y)$ the function mapping every $x\in L$ to $g(x,y)\in L$.


\subsection{Streams and Windows}

\begin{quote} In the rest of the paper, $\Sigma$ will denote a finite nonempty set of propositional atoms containing the special symbol $\top$.
\end{quote} A {\em formula} (over $\Sigma$) is defined by the grammar
\begin{align*} \alpha ::= a\mid\neg\alpha\mid\alpha\land\alpha\mid\alpha\lor\alpha\mid\alpha\rightarrow\alpha\mid\Diamond\alpha\mid\Box\alpha\mid @_t\alpha\mid\boxplus\lr\a
\end{align*} where $a\in\Sigma$, $t\geq 1$, and $\ell,r\in\mbb N^\infty$, $\ell\leq r$. We call $\alpha$ (i) {\em $\Box$-free} if it does not contain $\Box$; (ii) {\em monotone} if it does not contain $\neg,\rightarrow,\Box$; and (iii) {\em normal} if it does not contain $\neg,\lor,\rightarrow,\Diamond$.

A {\em stream} (over $\Sigma$) is an infinite sequence $\mb I=I_1I_2\ldots$ of subsets of $\Sigma$, i.e., $I_t\seq\Sigma$ for all {\em time points} $t\geq 1$. We call a stream $\mb J=\mb J_1\mb J_2\ldots$ a {\em substream} of $\mb I$, in symbols $\mb J\seq\mb I$, if $J_t\seq I_t$ for all $t\geq 1$. In the sequel, we omit empty sets in a sequence and write, e.g., $I_1I_3$ instead of $I_1\0 I_3\0\0\ldots$, and we denote the empty sequence $\0\0\ldots$ simply by $\0$. We define the {\em support} of $\mb I$, in symbols $\supp\mb I$, to be the tightest interval $[t_1,t_2]$ containing $\{t\geq 1\mid I_t\neq\0\}$; formally, $t_1=\min\{t\geq 1\mid I_t\neq\0\}$ and $t_2=\max\{t\geq 1\mid I_t\neq\0\}$ in case $\mb I\neq\0$, and $\supp\0=\0$.

A {\em window}{\footnote{\cite{Beck18} employed more sophisticated windows and called them {\em window functions}; for simplicity, we consider here only the windows defined above and note that our results are independent of the particular choice of windows.}} is a function $[\,.\,]$ mapping every stream $\mb I=I_1I_2\ldots$ to the substream $\mb I[\ell,r;t]=I_{\max\{0,t-\ell\}}\ldots I_{t+r}$ of $\mb I$, where $\ell,r\in\mbb N^\infty$, $\ell\leq r$. Note that $[\,.\,]$ and $\supp$ are monotone functions, that is, $\mb I\seq\mb J$ implies $\mb I[\ell,r;t]\seq\mb J[\ell,r;t]$ and $\supp\mb I\seq\supp\mb J$, for all $\ell,r\in\mbb N^\infty$ and $t\geq 1$.

\subsection{Syntax}\label{sec:Syntax}



A ({\em stream logic}) {\em program} $\P$ is a finite nonempty set of {\em rules} of the form
\begin{align}\label{equ:rule} \alpha\la\beta_1,\ldots,\beta_j,\naf\beta_{j+1},\ldots,\naf\beta_k,\quad k\geq j\geq 1,
\end{align} where $\alpha$ is a normal $t$-formula, $\beta_1,\ldots,\beta_k$ are formulas, and $\naf$ denotes {\em negation-as-failure} \citep{Clark78}. We will often write $\la[\rho]$ in expressions of the form \yref{equ:rule} to make the name of the rule explicit. For convenience, we define for a rule $\rho$ of the form \yref{equ:rule}, $\mm H(\rho)=\alpha$, and $\mm B(\rho)=\beta_1\land\ldots\land\beta_j\land\neg\beta_{j+1}\land\ldots\land\neg\beta_k$. As is customary in logic programming, we will interpret every finite set $A$ of formulas as the conjunction $\bigwedge A$ over all formulas in $A$. We call a rule $\rho$ a {\em fact} if $\mm B(\rho)=\top$, and we call $\rho$ {\em ordinary} if $\a,\beta_1,\ldots,\beta_k\in\Sigma$. Moreover, we define $\mm H(\P)$ to be the conjunction of all rule heads occurring in $\P$, that is, $\mm H(\P)=\bigwedge_{\rho\in\P}\mm H(\rho)$.

\subsection{Semantics of  Beck et al. (2018)}\label{sec:Beck18}

We now recall the FLP-style answer set semantics \citep{Faber11} as defined in \cite{Beck18} and we show that their semantics yields counter-intuitive answer sets (cf. \yref{exa:a}).

Let $T$ be a closed interval in $\mbb N$ and let $\Gamma\seq\Sigma$ be a finite set, called the {\em background data}. We define the {\em entailment relation} $\modG$, with respect to $\Gamma$, for all streams $\mb I$, $a\in\Sigma\sm\{\top\}$, formulas $\alpha,\beta$, and all time points $t\in T$:
\begin{enumerate}
\item $\mb I,T,t\modG\top$;
\item $\mb I,T,t\modG a$ if $a\in I_t\cup\Gamma$;
\item $\mb I,T,t\modG\neg\alpha$ if $\mb I,T,t\not\modG\alpha$;
\item $\mb I,T,t\modG\alpha\land\beta$ if $\mb I,T,t\modG\alpha$ and $\mb I,T,t\modG\beta$;
\item $\mb I,T,t\modG\alpha\lor\beta$ if $\mb I,T,t\modG\alpha$ or $\mb I,T,t\modG\beta$;
\item $\mb I,T,t\modG\alpha\rightarrow\beta$ if $\mb I,T,t\not\modG\alpha$ or $\mb I,T,t\modG\beta$;
\item $\mb I,T,t\modG\Diamond\alpha$ if $\mb I,T,t'\modG\alpha$, for some $t'\in T$;
\item $\mb I,T,t\modG\Box\alpha$ if $\mb I,T,t'\modG\alpha$, for all $t'\in T$;
\item $\mb I,T,t\modG @_{t'}\alpha$ if $\mb I,T,t'\modG\alpha$, and $t'\in T$;
\item $\mb I,T,t\modG\boxplus\lr\alpha$ if $\mb I[\ell,r;t],T,t\modG\alpha$.
\end{enumerate} In case $\mb I,T,t\modG\alpha$, we call $\mb I$ a {\em $(t,T)$-model} of $\alpha$.

We wish to evaluate $\P$ with respect to some fixed stream $\mb D$, called the {\em data stream}. We call a stream $\mb I$ an {\em interpretation stream} for $\mb D$ if $\mb D\seq\mb I$, and we say that such an interpretation stream $\mb I$ is a {\em $(t,T)$-model} of $\P$ if $\mb I,T,t\modG\mm B(\rho)\rightarrow\mm H(\rho)$, for all rules $\rho\in\P$. The {\em reduct} of $\P$ with respect to $\mb I$ and $T$ at time point $t$ is given by $$\P\ITt=\{\rho\in\P\mid\mb I,T,t\modG\mm B(\rho)\}.$$


\begin{definition}\label{def:Beck18}[\cite{Beck18}] Let $T$ be a closed interval in $\mbb N$ and let $t\in T$. An interpretation stream $\mb I$ for $\mb D$ is a {\em $(t,T)$-answer stream} of $\P$ (for $\mb D$) if $\mb I$ is a $(t,T)$-model of $\P\ITt$ and there is no $(t,T)$-model $\mb J$ of $\P\ITt$ (for $\mb D$) with $\mb J\sneq\mb I$.
\end{definition}

Note that the minimality condition in \yref{def:Beck18} is given with respect to the same interval $T$, which is crucial. In fact, the following example shows that as a consequence of \yref{def:Beck18}, trivial programs may have infinitely many answer streams which is counter-intuitive from an answer set programming perspective.

\begin{example}\label{exa:a} The ordinary program $P$ consisting of a single fact $a$ has the single answer set $\{a\}$. Given some arbitrary time point $t\geq 1$ for the evaluation of $P$ within the LARS context defined above, we therefore expect $P$ to have the single answer stream $\{a\}_t$. Unfortunately, under \cite{Beck18}'s semantics, $P$ has {\em infinitely} many answer streams: the $(t,[t,t])$-answer stream $\{a\}_t$, the $(t,[t,t+1])$-answer stream $\{a\}_t\0_{t+1}$, the $(t,[t,t+2])$-answer stream $\{a\}_t\0_{t+1}\0_{t+2}$ and so on. 
\end{example}

The reason for the existence of the infinitely many answer streams for the trivial program in \yref{exa:a} is the preselection of the {\em fixed} interval $T$ in \yref{def:Beck18} for the semantic evaluation of programs. As a negative consequence of this choice, which appears to be an artificial simplification of the semantics of programs, is that some specifications which occur in practice cannot be expressed within the LARS language as is demonstrated by the following example.

\begin{example}\label{exa:Box_b} Let $T$ be some interval and let $t\in T$ be some time point. According to \yref{def:Beck18}, the statement ``$a$ holds at $t$ if $b$ holds at every time point in $T$'' is formalized by the single rule $a\la\Box b$ evaluated at time point $t$. Now consider the slighly different statement ``$a$ holds at $t$ if $b$ holds at every relevant time point in the support\footnote{Recall from \yref{sec:Syntax} that the support of a stream is given by the tightest intervall containing all non-empty (i.e., relevant) time points.} of the input data.'' As natural as this statement seems, it is {\em not} expressible within the LARS language. The intuitive reason is that the support function is flexible and depends on the data, whereas the preselected interval $T$ is fixed by the programmer and therefore does not depend on the data. In a sense, preselecting fixed intervals for the semantic evaluation of programs contradicts the very idea of stream reasoning which aims at coping with data that changes at a high rate by incorporating window operators on a {\em syntactic} level for selecting relevant time points. Arguably, it is therefore more natural to formalize the first statement by the rule $a\la\boxplus_T\Box b$ thus syntactically encoding the restricted interval $T$, and interpreting $a\la\Box b$ as a formalization of the second statement (cf. \yref{exa:Box_b_2}).
\end{example}

\subsection{Refined Semantics}\label{sec:Refined}

We refine the FLP-style semantics of \cite{Beck18} (cf. \yref{def:Beck18}) by employing {\em dynamic} intervals. For this we first refine the entailment relation by using the support function in the definition of $\Box$ and $\Diamond$ for dynamically computing intervals instead of the fixed interval $T$ used by \cite{Beck18}:

\begin{enumerate}
\item $\mb I,t\modG\Diamond\alpha$ if $\mb I,t'\modG\alpha$, for some $t'\in\supp\mb I$;
\item $\mb I,t\modG\Box\alpha$ if $\mb I,t'\modG\alpha$, for all $t'\in\supp\mb I$;
\item $\mb I,t\modG @_{t'}\alpha$ if $\mb I,t'\modG\alpha$, for $t'\geq 1$.
\end{enumerate} In case $\mb I,t\modG\alpha$, we call $\mb I$ a {\em $t$-model} of $\alpha$, and we call $\alpha$ {\em $t$-consistent} (resp., {\em $t$-inconsistent}) if $\alpha$ has at least one (resp., no) $t$-model. For convenience, we call $\alpha$ a {\em $t$-formula} if $\alpha$ is $t$-consistent.

\begin{example} The formula $\boxplus_{[0,0]}@_2 a$ is 1-inconsistent since $@_2$ is a reference to time point 2 which is outside the scope of the window $\boxplus_{[0,0]}$ evaluated at time point 1. More precisely, let $\mb I=I_1I_2\ldots$ be an arbitrary stream and compute $\mb I[0,0;1]=I_1$ which implies $I_1,2\not\modG a$---so $\mb I$ is not a 1-model of $\alpha$.
\end{example}

\begin{remark}\label{rem:incon} Note that $t$-inconsistency of normal formulas can be easily verified by a syntactic check as in the example above and in the rest of the paper we assume that (normal) formulas occurring in rule heads are $t$-consistent, for all relevant $t$.
\end{remark}

We can now refine and simplify the definition of answer streams by omitting the reference to interval $T$ which gives a more natural minimality condition.

\begin{definition}\label{def:answer} An interpretation stream $\mb I$ for $\mb D$ is a {\em $t$-answer stream} of $\P$ (for $\mb D$) if $\mb I$ is a substream minimal $t$-model of $\P\It$.
\end{definition}

\begin{example} The ordinary program $P$ of \yref{exa:a} consisting of the single fact $a$ has the single $t$-answer stream $\{a\}_t$ as expected.
\end{example}

\begin{example}\label{exa:Box_b_2} The two statements in \yref{exa:Box_b} are formalized according to \yref{def:answer} by the two rules $a\la\boxplus_T\Box b$ and $a\la\Box b$, respectively, as desired.
\end{example}

We now illustrate the above concepts in more detail with the following running example.

\begin{example}\label{exa:running} Consider the program $\P$ consisting of the following rules:
\begin{align*} 
@_2 a&\lra{1}\naf @_7 c & \boxplus_{[1,\infty]}\Box c&\lra{3}\naf @_2 a\\
\boxplus_{[\infty,0]}\Box a&\lra{2}\naf c & \boxplus_{[2,3]}\Box(a\land b)&\lra{4}\boxplus_{[0,1]}\Diamond c,\Box d.
\end{align*} Let the background data $\Gamma$ consist of the single proposition $d$, and let the data stream $\mb D$ be given by
\begin{align*} \mb D=\{a\}_1\{a,b\}_5\{c\}_{10}.
\end{align*} That is, the propositions $a$ and $b$ hold at time point 5 and so on. Then, the 5-answer streams of $\P$ (for $\mb D$) are given by:
\begin{align*} 
\mb I &= \{a\}_1\{a,b\}_3\{a,b,c\}_4\{a,b,c\}_5\{a,b,c\}_6\{a,b,c\}_7\{a,b,c\}_8\{c\}_9\{c\}_{10};\\
\mb J &= \{a\}_1\{a\}_2\{a\}_3\{a\}_4\{a,b\}_5\{c\}_{10}.
\end{align*} For instance, we verify that $\mb I$ is indeed a 5-answer stream of $\P$. First of all, note that $\mb I$ is a 5-model of $\P$ (for $\mb D$): (i) as $c$ holds in $\mb I$ at time points 5 and 7, we have $\mb I,5\not\modG\naf @_7 c$ and $\mb I,5\not\modG\naf c$ which implies $\mb I,5\modG\rho_1$ and $\mb I,5\modG\rho_2$; (ii) as $c$ holds at every time point in the interval $[4,10]$, we have $\mb I,5\modG\boxplus_{[1,\infty]}\Box c$ which implies $\mb I,5\modG\rho_3$; and (iii) as $a$ and $b$ hold at every time point in $[3,8]$, we have $\mb I,5\modG\boxplus_{[2,3]}\Box(a\land b)$ which implies $\mb I,5\modG\rho_4$.

Now we argue that $\mb I$ is a {\em minimal} 5-model of $\P^{\mb I,5}=\{\rho_3,\rho_4\}$. To this end, suppose $\mb I'=\mb I'_1\mb I'_2\ldots$ is a 5-model of $\P^{\mb I,5}$ for $\mb D$ with $\mb D\seq\mb I'\seq\mb I$. Then, since $\mb I',5\modG\mm B(\rho_3)$ and $\mb I',5\modG\mm B(\rho_4)$, for $\mb I'$ to be a 5-model of $\P$ we must have $\mb I',5\modG\mm H(\rho_3)$ and $\mb I',5\modG\mm H(\rho_4)$; but this is equivalent to $\mb I'[1,\infty;5],t\modG c$, for all $t\in [4,10]$, and $\mb I'[2,3;5],t'\modG a\land b$, for all $t'\in [3,8]$ where $\mb I'[1,\infty;5]=I'_4\ldots I'_{10}$ and $\mb I'[2,3;5]= I'_3\ldots I'_8$, respectively. That is, we have $c\in I'_t$, for all $t\in [4,10]$, and $a,b\in I'_{t'}$, for all $t'\in [3,8]$---but this, together with $\mb D\seq\mb I'\seq\mb I$, immediately implies $\mb I'=\mb I$ which shows that $\mb I$ is indeed a minimal 5-model of $\P^{\mb I,5}$ and therefore a 5-answer stream of $\P$.
\end{example}

It is important to emphasize that we can capture \cite{Beck18}'s semantics as follows.

\begin{proposition}\label{prop:boxplus_P} Let $T=[t_1,t_2]$ be an interval and let $t\in T$ be some time point. An interpretation stream $\mb I$ for $\mb D$ is a $(t,T)$-answer stream of $P$ \ioff $\mb I\cup\{\#\}_{t_1}\ldots\{\#\}_{t_2}$ is a $t$-answer stream of $$\boxplus_T P\cup\{@_t\#\mid t\in T\},$$ where $\#$ is a special symbol not occurring in $\Sigma$ and $\boxplus_T P$ consists of all rules of the form
\begin{align*} \boxplus_T\rho=\boxplus_T\alpha\la\boxplus_T\beta_1,\ldots,\boxplus_T\beta_j,\naf\boxplus_T\beta_{j+1},\ldots,\naf\boxplus_T\beta_k,\quad\rho\in P.
\end{align*}
\end{proposition}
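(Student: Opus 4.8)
The plan is to prove the biconditional by establishing a tight correspondence between the fixed-interval semantics of \cite{Beck18} (\yref{def:Beck18}) and the refined dynamic-interval semantics (\yref{def:answer}), where the role of the fixed interval $T=[t_1,t_2]$ is taken over on the right-hand side by two syntactic devices: the outer window $\boxplus_T$ wrapped around every subformula, and the padding facts $@_t\#$ for $t\in T$ together with the padding stream $\{\#\}_{t_1}\ldots\{\#\}_{t_2}$. The padding facts force the support of any interpretation stream on the right to contain all of $[t_1,t_2]$, so that the dynamically computed intervals used by $\Box$ and $\Diamond$ in \yref{def:answer} coincide with the static interval $T$ used in \yref{def:Beck18}.

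First I would prove a semantic bridging lemma: for every formula $\alpha$ occurring in $P$, every time point $t\in T$, and every interpretation stream $\mb I$ for $\mb D$, one has
\begin{align*}
\mb I,T,t\modG\alpha\quad\text{iff}\quad\mb I\cup\{\#\}_{t_1}\ldots\{\#\}_{t_2},t\modG\boxplus_T\alpha.
\end{align*}
I would prove this by structural induction on $\alpha$. The base cases and the propositional connectives $\neg,\land,\lor,\rightarrow$ are routine, using that $\#\notin\Sigma$ so the padding never affects the truth of atoms from $\Sigma$. The interesting cases are the modal ones. For $\Diamond$ and $\Box$, the point is that the outer $\boxplus_T$ restricts the stream to exactly the window $\mb I[\,\cdot\,;t]$ covering $T$, while the padding facts guarantee that the support of the restricted stream equals $T$ itself; hence ``for some/all $t'\in\supp(\cdot)$'' on the refined side reduces to ``for some/all $t'\in T$'' on the fixed side. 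For $@_{t'}$, the clause $t'\in T$ in \yref{def:Beck18} is matched because $\boxplus_T$ discards references outside $T$. For a nested window $\boxplus_{[\ell,r]}$ I would use that composing two windows amounts to intersecting their ranges, which is exactly the behavior under both semantics.

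Second, using the bridging lemma I would transfer the two defining conditions of an answer stream. The model condition $\mb I,T,t\modG\mm B(\rho)\rightarrow\mm H(\rho)$ for all $\rho\in P$ corresponds, via the lemma applied to $\mm B(\rho)$ and $\mm H(\rho)$, to the padded stream being a $t$-model of the rule $\boxplus_T\rho$; and the reduct $P\ITt$ matches $(\boxplus_T P)\It$ computed on the padded stream, since $\mb I,T,t\modG\mm B(\rho)$ iff the padded stream $t$-satisfies $\boxplus_T\mm B(\rho)=\boxplus_T\beta_1\land\ldots$. For the minimality condition I would argue that the substream-minimal $t$-models of the padded program, intersected back with $\Sigma$, are exactly the $\subsetneq$-minimal $(t,T)$-models of $P\ITt$: the padding facts $@_t\#$ force the $\#$-part of every model to be fixed at $\{\#\}_{t_1}\ldots\{\#\}_{t_2}$, so minimization on the padded side operates only on the $\Sigma$-coordinates, and these are confined to $T$ because $\boxplus_T$ makes all atoms outside $T$ irrelevant to rule satisfaction. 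This yields the desired bijection between $(t,T)$-answer streams $\mb I$ of $P$ and $t$-answer streams $\mb I\cup\{\#\}_{t_1}\ldots\{\#\}_{t_2}$ of the padded program.

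The main obstacle I anticipate is the $\Box$ case of the bridging lemma, specifically controlling the support. On the refined side $\Box$ quantifies over $\supp$ of the \emph{windowed} stream, and I must ensure this support is precisely $T$ regardless of how sparse the actual data $\mb I$ is inside $T$ — this is exactly what the padding facts are designed to guarantee, but the argument requires care to check that $\boxplus_T$ does not truncate the padded time points and that the special symbol $\#$, carried by $@_t\#$, genuinely saturates every $t\in T$ while remaining invisible to the original rules since $\#\notin\Sigma$. I would also need to handle the corner case where $T$ is a singleton or where some windows in $P$ reach outside $T$, invoking the $t$-consistency assumption from \yref{rem:incon} to discard heads that refer outside the admissible range.
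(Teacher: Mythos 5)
A preliminary remark: the paper states \yref{prop:boxplus_P} \emph{without} a proof, so there is no official argument to compare yours against; what follows judges your proposal on its own merits. Your architecture (a bridging lemma relating $\mb I,T,t\modG\alpha$ to $\mb I\cup\{\#\}_{t_1}\ldots\{\#\}_{t_2},t\modG\boxplus_T\alpha$, then transfer of the reduct and of minimality) is the natural one, but the bridging lemma is false under the paper's definitions, and it fails precisely at the step you wave through with ``composing two windows amounts to intersecting their ranges, which is exactly the behavior under both semantics.'' Under the $(t,T)$-entailment of \yref{sec:Beck18}, an inner window changes only the \emph{stream} ($\mb I\mapsto\mb I[\ell,r;t]$, clause 10), while $\Box$ and $\Diamond$ keep quantifying over the \emph{fixed} interval $T$ (clauses 7--8); under the refined semantics, an inner window also cuts away your padding, so the support---which is the quantification domain of $\Box$ and $\Diamond$---shrinks to the part of $T$ inside that inner window. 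Concretely, take $T=[1,2]$, $t=1$, $\Gamma=\0$, $\alpha=\boxplus_{[0,0]}\Box b$, and any $\mb I$ with $b\in I_1$. Then $\mb I,T,1\not\modG\alpha$, because $\Box b$ must hold at $t'=2$ in $\mb I[0,0;1]$, which is empty there; yet $\mb I\cup\{\#\}_1\{\#\}_2,1\modG\boxplus_T\alpha$, because after both windows the remaining stream is $(I_1\cup\{\#\})_1$, whose support is $[1,1]$, and $b$ holds there. (Your $@$-case suffers from the same root cause: for $t'\notin T$, Beck's clause makes $@_{t'}\beta$ false outright, while the refined semantics evaluates $\beta$ on blanked-out data, so e.g.\ $@_{t'}\neg a$ comes out \emph{true}.) Note also that your ``main obstacle'' paragraph worries about the wrong window: the outer $\boxplus_T$ cannot truncate the padding, since it matches $[t_1,t_2]$ exactly; it is the \emph{inner} windows of $P$ that do.

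This is not a repairable induction detail; it contradicts the very equivalence you are trying to prove. Consider $P=\{a\la\boxplus_{[0,0]}\Box b\}$ with $\mb D=\{b\}_1$, $\Gamma=\0$, $T=[1,2]$, $t=1$. Under the $(t,T)$-semantics the body is false in every interpretation stream, so $P^{\mb I,T,1}=\0$ for all $\mb I$, and $\mb D=\{b\}_1$ is the unique $(1,T)$-answer stream of $P$. But in the padded stream $\{b,\#\}_1\{\#\}_2$ the translated body $\boxplus_T\boxplus_{[0,0]}\Box b$ is true while the translated head $\boxplus_T a$ is false, so $\{b,\#\}_1\{\#\}_2$ is not even a $1$-model of its own reduct, hence not a $1$-answer stream of $\boxplus_T P\cup\{@_1\#,@_2\#\}$. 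So the stated correspondence fails on this program, and any correct proof must either restrict $P$ (e.g., forbid $\Box$, $\Diamond$, and $@$ from occurring inside windows properly contained in $T$, under which restriction your induction has a chance of going through) or reinterpret the $(t,T)$-semantics so that windows also rescope the quantification domain of $\Box$ and $\Diamond$, as in the original LARS formalism of Beck et al.\ rather than in the paper's rendition of it. As written, the lemma on which your whole proof rests is false, so the proof does not go through.
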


At this point, we have successfully extended the FLP-style answer set semantics from ordinary to stream logic programs by refining \cite{Beck18}'s semantics. Unfortunately, as for other program extensions (cf. \cite{Shen14,Antic13}), our FLP-style semantics may permit circular justifications as is demonstrated by the following example.

\begin{example}\label{exa:R} Consider the program $\R$ consisting of the following two rules:
\begin{align*} 
a&\lra{1}\Box b\\
b&\lra{2}\Box a.
\end{align*} We argue that the $t$-model $\{a,b\}_t$ of $\R$ is a $t$-answer stream of $\R$, for every $t\geq 1$ (and $\mb D=\Gamma=\0$): (i) The empty stream $\0$ is not a $t$-model of $\R^{\{a,b\}_t,t}=\R$ since both rules fire in $\0$; (ii) the stream $\{a\}_t$ is not a $t$-model of $\R$ since $\rho_2$ fires; (iii) the stream $\{b\}_t$ is not a $t$-model of $\R$ since $\rho_1$ fires. This shows that $\{a,b\}_t$ is indeed a subset minimal $t$-model of $\R^{\{a,b\}_t,t}$ and, hence, a $t$-answer stream of $\R$.
\end{example}


In the next two sections, we will develop the tools for formalizing the reasoning in \yref{exa:running} in an operational setting (cf. \yref{exa:Fp}) while avoiding circular justifications.

\section{Fixed Point Operators}\label{sec:operators}

In this section, we recast the following well-known fixed point operators from ordinary to stream logic programs: (i) the van Emden-Kowalski operator $\mm T_\P$ \citep{vanEmden76}, and (ii) the Fitting operator $\Phi_\P$ \citep{Fitting02}. This task turns out to be non-trivial due to the intricate properties of windows and other modal operators occurring in rule heads.

\begin{quote} In the rest of the paper, let $\mb I$ be a stream, let $\mb D$ be some data stream, let $\Gamma$ be some background data, and let $t\geq 1$ be some fixed time point.
\end{quote}

\subsection{The Model Operator}\label{sec:MMt}

In this subsection, we define an operator for the evaluation of rule heads. Specifically, given a normal $t$-formula $\alpha$, we wish to construct a $t$-model of $\alpha$ which is in some sense minimal with respect to a given stream $\mb I$ (cf. \yref{thm:MMt}).

\begin{definition}\label{def:MMt} For normal $t$-formulas $\alpha$ and $\beta$, and for $a\in\Sigma$, we define the {\em partial model operator} $\Mt$ at time point $t$ and with respect to $\mb I$, inductively as follows:
\begin{align*} 
&\Mt(a)=
  \begin{cases} 
  \{a\}_t & \text{if }a\not\in\Gamma,\\
  \0 & \text{if }a\in\Gamma;
  \end{cases}\\
&\Mt(\alpha\land\beta)=\Mt(\alpha)\cup\Mt(\beta);\\
&\Mt(\Box\alpha)=\bigcup_{t'\in\supp\mb I}\mm M_{\mb I,t'}(\alpha);\\
&\Mt(@_{t'}\alpha)=\mm M_{\mb I,t'}(\alpha);\\
&\Mt(\boxplus\lr\alpha)=\mm M_{\mb I[\ell,r;t],t}(\alpha).
\end{align*} Finally, define the {\em model operator} $\MMt$ to be the twofold application of $\Mt$, that is,
\begin{align*} \MMt(\alpha)=\mm M_{\Mt(\alpha),t}(\alpha).
\end{align*}
\end{definition}

One can easily derive the following computation rules for the model operator:
\begin{align*} 
&\MMt(a) = \Mt(a);\\
&\MMt(@_{t'}\alpha) = \mm{MM}_{\mb I,t'}(\alpha);\\
&\MMt(\boxplus\lr\alpha) = \mm{MM}_{\mb I[\ell,r;t],t}(\alpha).
\end{align*}

In case $\alpha$ is $\Box$-free, we will often write $\mm M_t(\alpha)$ instead of $\Mt(\alpha)$ to indicate that the evaluation of $\Mt$ does not depend on $\mb I$.

\begin{example}\label{exa:MMt} Let $\alpha$ be the $\Box$-free normal 1-formula $\boxplus_{[0,0]} @_1a\land @_2b$, and compute
\begin{align*} \mm M_{\0,1}(\boxplus_{[0,0]}@_1a\land @_2b)=\mm M_{\0[0,0;1],1}(a)\cup\mm M_{\0,2}(b)=\{a\}_1\{b\}_2
\end{align*} which is a 1-model of $\alpha$. On the other hand, for the normal 1-formula $\beta=\Box a\land b$ containing $\Box$, we obtain
\begin{align*} \mm M_{\0,1}(\Box a\land b)=\mm M_{\0,1}(\Box a)\cup\mm M_{\0,1}(b)=\mm M_{\0,1}(b)=\{b\}_1
\end{align*} which is {\em not} a 1-model of $\beta$; however, by applying $\mm M_{\0,1}$ twice, we do obtain a 1-model of $\beta$:
\begin{align*} \mm{MM}_{\0,1}(\Box a\land b)=\mm M_{\{b\}_1,1}(\Box a\land b)=\mm M_{\{b\}_1,1}(a)\cup\mm M_{\{b\}_1,1}(b)=\{a,b\}_1.
\end{align*} Intuitively, to obtain a 1-model of $\beta$, we have to apply $\mm M_{\0,1}$ twice as the subformula $b$ induces an expansion of the support of the generated stream which has to be taken into account for the generation of a 1-model for $\Box a$ (note that conjunctions are treated separately by the partial model operator).
\end{example}

\yref{exa:MMt} indicates that $\Box$ requires a special treatment. In fact, if $\alpha$ is $\Box$-free then $\Mt$ does not depend on $\mb I$ and, consequently, in this case $\Mt(\alpha)$ and $\MMt(\alpha)$ coincide which simplifies the matters significantly.

\begin{proposition}\label{prop:B-free} For every $\Box$-free normal $t$-formula $\alpha$, $\Mt(\alpha)=\mm M_{\mb J,t}(\alpha)$ holds for all streams $\mb I$ and $\mb J$; consequently, $\Mt(\alpha)=\MMt(\alpha)$.
\end{proposition}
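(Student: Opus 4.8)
The plan is to prove the slightly stronger statement that for every $\Box$-free normal formula $\alpha$, every time point $s\geq 1$, and all streams $\mb I$ and $\mb J$, we have $\mm M_{\mb I,s}(\alpha)=\mm M_{\mb J,s}(\alpha)$, and to do so by structural induction on $\alpha$. Strengthening the claim to arbitrary evaluation points $s$ (rather than the fixed $t$) is necessary because the recursion shifts the evaluation point in the $@_{t'}$ clause, and keeping the universal quantification over \emph{all} streams in the induction hypothesis is exactly what makes the window clause go through. For the base case $\alpha=a$ with $a\in\Sigma$, the defining clause yields $\{a\}_s$ or $\0$ according to whether $a\notin\Gamma$ or $a\in\Gamma$, and in neither case does $\mb I$ occur, so the value is stream-independent.

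For the inductive step I would treat the remaining three constructors available to a $\Box$-free normal formula. For a conjunction $\alpha=\beta\land\gamma$, the clause $\mm M_{\mb I,s}(\beta\land\gamma)=\mm M_{\mb I,s}(\beta)\cup\mm M_{\mb I,s}(\gamma)$ reduces the claim to the two conjuncts (both of which inherit $\Box$-freeness and normality), and the induction hypothesis applied to each at the same point $s$ gives stream-independence of each term, hence of their union. For $\alpha=@_{t'}\beta$, the clause gives $\mm M_{\mb I,s}(@_{t'}\beta)=\mm M_{\mb I,t'}(\beta)$, and applying the induction hypothesis to $\beta$ at the point $t'$ yields the equality; this is where quantifying over all evaluation points is used. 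Crucially, there is no $\Box$ clause to consider, since $\alpha$ is $\Box$-free, and this is precisely the clause in the definition of $\Mt$ through which $\mb I$ enters essentially via $\supp\mb I$.

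The step I expect to be the main obstacle is the window clause $\alpha=\boxplus\lr\beta$, where $\mm M_{\mb I,s}(\boxplus\lr\beta)=\mm M_{\mb I[\ell,r;s],s}(\beta)$ feeds the recursion a stream $\mb I[\ell,r;s]$ that genuinely depends on $\mb I$. Here I would \emph{not} try to argue that the two window streams $\mb I[\ell,r;s]$ and $\mb J[\ell,r;s]$ coincide, since in general they need not; instead I would invoke the induction hypothesis for $\beta$ in its full ``for all streams'' strength, which gives $\mm M_{\mb I[\ell,r;s],s}(\beta)=\mm M_{\mb J[\ell,r;s],s}(\beta)$ directly, irrespective of how the two window streams differ. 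This is exactly the reason the statement must be phrased as holding for \emph{every} pair of streams rather than for a single fixed stream.

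Finally, the ``consequently'' clause follows by instantiating the proved equality with the particular stream $\mb J:=\Mt(\alpha)$: this gives $\Mt(\alpha)=\mm M_{\Mt(\alpha),t}(\alpha)$, and since $\mm M_{\Mt(\alpha),t}(\alpha)=\MMt(\alpha)$ by the definition of the model operator, we conclude $\Mt(\alpha)=\MMt(\alpha)$.
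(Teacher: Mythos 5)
Your proof is correct and takes essentially the same route as the paper: the paper compresses your structural induction into the single observation that, by definition of the partial model operator, $\Mt(\alpha)$ depends on $\mb I$ only through the $\Box$ clause, which is exactly what your induction (with the strengthened hypothesis over all evaluation points and all streams, needed for the $@_{t'}$ and window cases) makes rigorous. The ``consequently'' step is identical in both: instantiate $\mb J=\Mt(\alpha)$, so that $\mm M_{\mb J,t}(\alpha)=\MMt(\alpha)$ by definition of the model operator.
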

\begin{proof} By definition of the partial model operator, $\Mt(\alpha)$ depends on $\mb I$ only if $\alpha$ contains $\Box$. The second assertion follows from the first with $\mb J=\Mt(\alpha)$ and $\MMt(\alpha)=\mm M_{\mb J,t}(\alpha)$.
\end{proof}

In the next two propositions, we show some monotonicity properties of the (partial) model operator.

\begin{proposition}\label{prop:MMt-mon} For every normal $t$-formula $\alpha$ and all streams $\mb K$ and $\mb I$, $\mb I\seq\mb J$ implies $\Mt(\alpha)\seq\mm M_{\mb J,t}(\alpha)$ and $\MMt(\alpha)\seq\mm{MM}_{\mb J,t}(\alpha)$. Moreover, $A\seq B$ implies $\Mt(A)\seq\Mt(B)$ and $\MMt(A)\seq\MMt(B)$, for all finite sets $A$ and $B$ of normal $t$-formulas.
\end{proposition}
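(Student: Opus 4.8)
The plan is to establish the stream-monotonicity of $\Mt$ first, to deduce the stream-monotonicity of $\MMt$ from it by a double application, and finally to obtain the two set-monotonicity statements by reducing conjunctions to unions. For the first assertion I would in fact prove the slightly stronger claim that $\mb I\seq\mb J$ implies $\mm M_{\mb I,s}(\alpha)\seq\mm M_{\mb J,s}(\alpha)$ for \emph{every} time point $s\geq 1$ simultaneously. This strengthening is essential because the defining clauses for $@_{t'}$ and $\Box$ recurse into subformulas that are evaluated at time points other than the given $t$, so an induction hypothesis fixed to a single time point would not be applicable. The induction then proceeds on the structure of the normal $t$-formula $\alpha$.

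In the base case $\alpha=a\in\Sigma$ the value $\mm M_{\mb I,s}(a)$ does not depend on the stream (it is $\{a\}_s$ or $\0$ according to whether $a\in\Gamma$), so equality holds trivially. For a conjunction $\alpha=\beta\land\gamma$ the clause $\mm M_{\mb I,s}(\beta\land\gamma)=\mm M_{\mb I,s}(\beta)\cup\mm M_{\mb I,s}(\gamma)$ reduces the claim to monotonicity of union, which follows at once from the two induction hypotheses. For $\alpha=@_{t'}\beta$ I would apply the strengthened hypothesis to $\beta$ at the time point $t'$, and for $\alpha=\boxplus\lr\beta$ I would first invoke the monotonicity of the window operator noted in \yref{sec:LARS}, namely $\mb I\seq\mb J\Rightarrow\mb I[\ell,r;t]\seq\mb J[\ell,r;t]$, and then apply the hypothesis to $\beta$ with respect to the two windowed streams.

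The only genuinely delicate case is $\alpha=\Box\beta$, where $\mm M_{\mb I,s}(\Box\beta)=\bigcup_{t'\in\supp\mb I}\mm M_{\mb I,t'}(\beta)$, since here the stream governs \emph{both} the summands and the index set of the union. I would split the inclusion into two steps: termwise application of the induction hypothesis yields $\bigcup_{t'\in\supp\mb I}\mm M_{\mb I,t'}(\beta)\seq\bigcup_{t'\in\supp\mb I}\mm M_{\mb J,t'}(\beta)$, and then the monotonicity of the support function, also recorded in \yref{sec:LARS}, namely $\mb I\seq\mb J\Rightarrow\supp\mb I\seq\supp\mb J$, enlarges the index set to give $\bigcup_{t'\in\supp\mb I}\mm M_{\mb J,t'}(\beta)\seq\bigcup_{t'\in\supp\mb J}\mm M_{\mb J,t'}(\beta)$; chaining the two inclusions closes the case. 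This is the step I expect to be the main obstacle, as it is the one place where a naive single-time-point induction breaks and where the monotonicity of windows and that of support are both needed together.

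The remaining assertions follow formally. For $\MMt$ I would apply the first assertion twice: it gives $\Mt(\alpha)\seq\mm M_{\mb J,t}(\alpha)$ as streams, and then, regarding these as the two streams in the first assertion, monotonicity of $\mm M_{\,.\,,t}(\alpha)$ in its stream argument yields $\mm M_{\Mt(\alpha),t}(\alpha)\seq\mm M_{\mm M_{\mb J,t}(\alpha),t}(\alpha)=\mm{MM}_{\mb J,t}(\alpha)$, which is exactly $\MMt(\alpha)\seq\mm{MM}_{\mb J,t}(\alpha)$. For the set-monotonicity part I would use that a finite set $A$ is read as $\bigwedge A$, so iterating the conjunction clause gives $\Mt(A)=\bigcup_{\gamma\in A}\Mt(\gamma)$; hence $A\seq B$ immediately entails $\Mt(A)\seq\Mt(B)$, and the same union identity shows $\mm M_{\mb K,t}(A)\seq\mm M_{\mb K,t}(B)$ at any fixed stream $\mb K$. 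Combining these, $\MMt(A)=\mm M_{\Mt(A),t}(A)\seq\mm M_{\Mt(B),t}(A)\seq\mm M_{\Mt(B),t}(B)=\MMt(B)$, where the first inclusion uses $\Mt(A)\seq\Mt(B)$ via stream-monotonicity (extended to conjunctions through the union clause) and the second uses $A\seq B$ at the fixed stream $\Mt(B)$.
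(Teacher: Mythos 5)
Your proof is correct and follows essentially the same route as the paper's: a structural induction on $\alpha$ whose only delicate case is $\Box\beta$, resolved by combining termwise application of the induction hypothesis with the monotonicity of $\supp$ to enlarge the index set, after which the $\MMt$ and finite-set assertions fall out as formal consequences. Your explicit strengthening of the induction hypothesis to all time points $s$ is a nice touch of rigor that the paper leaves implicit (its induction-hypothesis step is applied at time points $t'\in\supp\mb J$ other than $t$), but it is the same argument.
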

\begin{proof} The first inclusion can be shown by a straightforward structural induction on $\alpha$, so we prove here only the case $\alpha=\Box\beta$, for some normal $t$-formula $\beta$:
\begin{align*} 
\Mt(\Box\beta)
  &= \bigcup_{t'\in\supp\mb I}\mm M_{\mb I,t'}(\beta)
  &\seq\bigcup_{t'\in\supp\mb J}\mm M_{\mb I,t'}(\beta)
  &\stackrel{\mm I\mm H}\seq\bigcup_{t'\in\supp\mb J}\mm M_{\mb J,t'}(\beta)
  &=\mm M_{\mb J,t}(\Box\beta).
\end{align*} The second inclusion, $\MMt(\alpha)\seq\mm{MM}_{\mb J,t}(\alpha)$, is a direct consequence of the first. Finally, the second part of the proposition is an immediate consequence of the first part and the definition of the (partial) model operator.
\end{proof}

\begin{proposition}\label{prop:seq} For every normal $t$-formula $\alpha$, $\supp\Mt(\alpha)=\supp\MMt(\alpha)$ and $$\Mt(\alpha)\seq\MMt(\alpha).$$
\end{proposition}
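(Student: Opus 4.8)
The plan is to prove both assertions simultaneously by structural induction on the normal $t$-formula $\alpha$, exploiting that $\MMt(\alpha)=\mm M_{\mb K,t}(\alpha)$ with $\mb K:=\Mt(\alpha)$, so that the two operators differ only in the stream against which the second pass is evaluated. Since $\Mt$ depends on its stream only through the $\Box$-clause (the same fact underlying \yref{prop:B-free}), the whole difficulty is concentrated in how the possibly larger support $\supp\mb K$ replaces $\supp\mb I$ inside $\Box$-subformulas. The two tools I will lean on throughout are the monotonicity of the (partial) model operator in its stream argument (\yref{prop:MMt-mon}) and the observation, read off from \yref{exa:MMt}, that a second application only ever adds atoms at time points already forced by the first.

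The base and the ``shifting'' cases are routine. For $\alpha=a$ the computation rule $\MMt(a)=\Mt(a)$ makes both claims trivial. For $\alpha=@_{t'}\beta$ and $\alpha=\boxplus\lr\beta$ the computation rules $\MMt(@_{t'}\beta)=\mm{MM}_{\mb I,t'}(\beta)$ and $\MMt(\boxplus\lr\beta)=\mm{MM}_{\mb I[\ell,r;t],t}(\beta)$, paired with the matching identities for $\Mt$, reduce each claim directly to the induction hypothesis for $\beta$ at the shifted time point $t'$, respectively against the windowed stream $\mb I[\ell,r;t]$. For the conjunction $\alpha=\beta\land\gamma$ write $\mb K=\Mt(\beta)\cup\Mt(\gamma)$, so that $\MMt(\beta\land\gamma)=\mm M_{\mb K,t}(\beta)\cup\mm M_{\mb K,t}(\gamma)$. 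The inclusion $\Mt(\alpha)\seq\MMt(\alpha)$ follows by chaining: by induction $\Mt(\beta)\seq\MMt(\beta)=\mm M_{\Mt(\beta),t}(\beta)$, and since $\Mt(\beta)\seq\mb K$, monotonicity (\yref{prop:MMt-mon}) gives $\mm M_{\Mt(\beta),t}(\beta)\seq\mm M_{\mb K,t}(\beta)$; the symmetric bound for $\gamma$ and a union finish it. For the support, the inclusion just proved yields $\supp\Mt(\alpha)\seq\supp\MMt(\alpha)$, and the reverse needs the absorption observation that each second-pass conjunct cannot create support outside $\supp\mb K=\supp\Mt(\beta\land\gamma)$, so that the union of their supports collapses back to $\supp\mb K$.

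The case $\alpha=\Box\beta$ is the crux and the step I expect to fight with. Here $\mb K=\bigcup_{t'\in\supp\mb I}\mm M_{\mb I,t'}(\beta)$ while $\MMt(\Box\beta)=\bigcup_{t''\in\supp\mb K}\mm M_{\mb K,t''}(\beta)$, and the index set of the union changes from $\supp\mb I$ to the possibly incomparable $\supp\mb K$ (in general neither $\mb I\seq\mb K$ nor $\supp\mb I\seq\supp\mb K$ holds, as an inner $@$-subformula already shows). For the inclusion I will first upgrade the induction hypothesis via monotonicity to $\mm M_{\mb I,t'}(\beta)\seq\mm M_{\mb K,t'}(\beta)$ for every $t'\in\supp\mb I$ (using $\mm M_{\mb I,t'}(\beta)\seq\mb K$ together with \yref{prop:MMt-mon}), and then argue that every atom of the first pass is recovered by the second: an atom placed at the base point $t'$ forces $t'\in\supp\mb K$ and reappears in the $t''=t'$ term, whereas an atom placed at an absolute point (through an inner $@$, or at a support point through an inner $\Box$) is base-point independent and reappears at the corresponding $t''\in\supp\mb K$. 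The same base-point analysis, run for supports, shows $\supp\mm M_{\mb K,t''}(\beta)\seq\supp\mb K$ for all $t''\in\supp\mb K$, which supplies the reverse support inclusion and hence $\supp\Mt(\Box\beta)=\supp\MMt(\Box\beta)$.

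I anticipate that cleanly isolating this base-point/absolute dichotomy will be the real work, and it is probably best packaged as an auxiliary lemma stating that the support produced by a normal formula is stable once the ambient support already contains every forced time point. It is precisely this self-referential stabilization of the support under $\Box$—that re-evaluating over the already-enlarged interval $\supp\mb K$ neither enlarges it further nor drops any first-pass atom—that the two already-established propositions do not hand us for free, and on which the whole argument hinges.
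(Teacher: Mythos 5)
Your induction skeleton and the first half of your $\Box$-case coincide with the paper's proof: the chain $\mm M_{\mb I,t'}(\beta)\seq\mm{MM}_{\mb I,t'}(\beta)=\mm M_{\mm M_{\mb I,t'}(\beta),t'}(\beta)\seq\mm M_{\mb K,t'}(\beta)$, obtained from the induction hypothesis, \yref{prop:MMt-mon}, and $\mm M_{\mb I,t'}(\beta)\seq\mb K$, is exactly how the paper begins its treatment of $\alpha=\Box\beta$ (and your atomic, $@$, $\boxplus$, and $\land$ cases are fine). The genuine gap is in the second half, and you half-admit it yourself: the re-indexing of the union from $\supp\mb I$ to $\supp\mb K$ is deferred to an ``auxiliary lemma'' about support stabilization that you never prove, and the dichotomy you sketch for it is incorrect as stated. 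An atom contributed by an inner $\Box$ sitting underneath a window is placed at a support point of $\mb K[\ell,r;t']$, which \emph{does} depend on the base point $t'$; similarly, an atom under an inner $@_{t_0}$ with a deeper $\Box$ below it lands at stream-dependent positions. So ``base-point independent, hence reappears at the corresponding $t''\in\supp\mb K$'' is not an observation one can read off; it is precisely the claim that needs proof, and nothing in your write-up supplies it. A proposal whose crux is an acknowledged unproven lemma is not yet a proof.

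The idea that closes this gap in the paper---and which your proposal never isolates---is that the partial model operator evaluates $\Box$-rooted formulas independently of the evaluation time point: by definition, $\mm M_{\mb J,s}(\Box\beta)=\bigcup_{t''\in\supp\mb J}\mm M_{\mb J,t''}(\beta)$ contains no occurrence of $s$, so $\Mt(\Box\beta)=\mm M_{\mb I,t'}(\Box\beta)$ for every $t'$, which yields $\MMt(\Box\beta)=\mm{MM}_{\mb I,t'}(\Box\beta)$ for all $t'\in\supp\mb I$ (equation \yref{equ:box} in the paper). With this identity no atom-tracking is needed: the paper passes from $\mm M_{\mb K,t'}(\beta)$ to $\mm M_{\mb K,t'}(\Box\beta)$ and identifies the latter with $\MMt(\Box\beta)$ uniformly in $t'$, so the union over $\supp\mb I$ collapses. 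The same comment applies to the support identity: the paper proves $\supp\Mt(\alpha)=\supp\MMt(\alpha)$ by its own direct structural induction, whereas you obtain only one inclusion from $\Mt(\alpha)\seq\MMt(\alpha)$ and push the reverse inclusion (both in the $\land$ case and under $\Box$) onto the same unproven stabilization lemma. If you want to salvage your plan, prove the time-point independence of $\mm M_{\cdot\,,\cdot}(\Box\beta)$ first; your stabilization lemma then becomes unnecessary.
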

\begin{proof} The first identity can be proved by a straightforward structural induction on $\alpha$.

We prove the inclusion by structural induction on $\alpha$. The only non-trivial case is $\alpha=\Box\beta$, for some normal $t$-formula $\beta$:
\begin{align*} 
  \Mt(\Box\beta)
    &=\bigcup_{t'\in\supp\mb I}\mm M_{\mb I,t'}(\beta)\\
    &\stackrel{\mm I\mm H}\seq\bigcup_{t'\in\supp\mb I}\mm{MM}_{\mb I,t'}(\beta)\\
    &=\bigcup_{t'\in\supp\mb I}\mm M_{\mm M_{\mb I,t'}(\beta),t'}(\beta)\\
    &\seq\bigcup_{t'\in\supp\mb I}\mm M_{\mm M_{\mb I,t}(\Box\beta),t'}(\beta)\\
    &\seq\bigcup_{t'\in\supp\mb I}\mm M_{\mm M_{\mb I,t}(\Box\beta),t'}(\Box\beta)\\
    &=\bigcup_{t'\in\supp\mb I}\mm{MM}_{\mb I,t'}(\Box\beta)\\
    &=\MMt(\Box\beta)
\end{align*} where the second and third inclusion follows from \yref{prop:MMt-mon} together with
\begin{align*} \mm M_{\mb I,t'}(\beta)\seq\Mt(\Box\beta)\quad\text{for all }t'\in\supp\mb I,
\end{align*} and the last equality holds since:
\begin{align}\label{equ:box} \MMt(\Box\beta)=\mm{MM}_{\mb I,t'}(\Box\beta)\quad\text{for all }t'\in\supp\mb I.
\end{align} To prove \yref{equ:box}, first note that, by definition, $\Mt(\Box\beta)=\mm M_{\mb I,t'}(\Box\beta)$ holds for all $t'\in\supp\mb I$; consequently:
\begin{align*} 
  \MMt(\Box\beta)
    &=\mm M_{\mm M_{\mb I,t}(\Box\beta),t}(\Box\beta)\\
    &=\bigcup_{t''\in\supp\Mt(\Box\beta)}\mm M_{\Mt(\Box\beta),t''}(\beta)\\
    &=\bigcup_{t''\in\supp\mm M_{\mb I,t'}(\Box\beta)}\mm M_{\mm M_{\mb I,t'}(\Box\beta),t''}(\beta)\\
    &=\mm M_{\mm M_{\mb I,t'}(\Box\beta),t'}(\Box\beta)\\
    &=\mm{MM}_{\mb I,t'}(\Box\beta).
\end{align*}
\end{proof}

It will often be convenient to separate a proof into a $\Box$-free and a general case. Therefore we define the translation $\alpha\ibt$ of $\alpha$ with respect to $\mb I$ at time point $t$ to be the homomorphic extension to all normal $t$-formulas of:
\begin{align*}
& (\Box\alpha)\ibt = \bigwedge_{t'\in\supp\mb I}@_{t'}\a_{\mb I,t'};\\
& (\boxplus\lr\alpha)\ibt = \boxplus\lr\alpha_{\mb I[\ell,r;t],t},
\end{align*} where we interpret the empty conjunction in $(\Box\alpha)_{\0,t}$ as $\top$. Intuitively, $.\ibt$ eliminates every $\Box$ occurring in $\alpha$ while preserving the meaning of $\alpha$ in the following sense.

\begin{proposition}\label{prop:ibt} For every normal $t$-formula $\alpha$, and for all streams $\mb I$ and $\mb J$ with $\supp\mb I=\supp\mb J$, we have $\mb I,t\modG\alpha$ if, and only if, $\mb I,t\modG\alpha_{\mb J,t}$.
\end{proposition}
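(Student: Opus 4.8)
The plan is to prove the biconditional by structural induction on the normal $t$-formula $\alpha$. The one point to get right at the outset is that the equivalence must be established \emph{uniformly in the time point}: the clauses for $\Box$ and $@_{t'}$ move the point of evaluation, so the induction hypothesis has to assert the biconditional for every proper subformula, at every relevant time point, and for \emph{every} pair of streams sharing a support — not merely at the fixed $t$ and the fixed $\mb I,\mb J$. Since $\alpha$ is normal it is built solely from atoms using $\land$, $@_{t'}$, $\Box$, and $\boxplus\lr$, so exactly these five cases arise, and the translation $\alpha_{\mb J,t}$ is $\Box$-free by construction.

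First I would dispatch the routine cases. For an atom $a$ we have $a_{\mb J,t}=a$, so there is nothing to show. For $\alpha=\beta\land\gamma$ the translation is homomorphic, $(\beta\land\gamma)_{\mb J,t}=\beta_{\mb J,t}\land\gamma_{\mb J,t}$, and the claim follows by splitting the conjunction in the semantics and applying the induction hypothesis to each conjunct. For $\alpha=@_{t'}\beta$ I would use $(@_{t'}\beta)_{\mb J,t}=@_{t'}\beta_{\mb J,t'}$ together with the clause $\mb I,t\modG@_{t'}\gamma$ iff $\mb I,t'\modG\gamma$; both sides then reduce to an evaluation at $t'$, and the induction hypothesis at time $t'$ (same streams $\mb I,\mb J$, hence the same support condition) closes the case.

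The genuinely modal cases are where the support hypothesis does its work. For $\alpha=\Box\beta$ I would unfold $(\Box\beta)_{\mb J,t}=\bigwedge_{t'\in\supp\mb J}@_{t'}\beta_{\mb J,t'}$, so that $\mb I,t\modG(\Box\beta)_{\mb J,t}$ iff $\mb I,t'\modG\beta_{\mb J,t'}$ for all $t'\in\supp\mb J$. By the induction hypothesis at each such $t'$ this is equivalent to $\mb I,t'\modG\beta$ for all $t'\in\supp\mb J$, and invoking $\supp\mb J=\supp\mb I$ turns the index set into $\supp\mb I$, which is precisely the semantic clause for $\mb I,t\modG\Box\beta$. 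This is the step that consumes the hypothesis $\supp\mb I=\supp\mb J$ directly, and it goes through cleanly.

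The $\boxplus$ case is where I expect the real difficulty. Here $(\boxplus\lr\beta)_{\mb J,t}=\boxplus\lr\beta_{\mb J[\ell,r;t],t}$, and the semantic clause for $\boxplus\lr$ collapses both $\mb I,t\modG\boxplus\lr\beta$ and $\mb I,t\modG\boxplus\lr\beta_{\mb J[\ell,r;t],t}$ to evaluations of $\beta$, respectively $\beta_{\mb J[\ell,r;t],t}$, in the windowed stream $\mb I[\ell,r;t]$ at time $t$. To bring the induction hypothesis to bear on $\beta$ I must apply it to the windowed pair $\mb I[\ell,r;t]$ and $\mb J[\ell,r;t]$ in place of $\mb I,\mb J$, which is licensed only after verifying that windowing transports the support condition, i.e. that $\supp\mb I=\supp\mb J$ forces $\supp(\mb I[\ell,r;t])=\supp(\mb J[\ell,r;t])$. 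Pinning down exactly this interaction between the window function and the support — and, if necessary, isolating it as a separate lemma that the induction rests on — is the delicate point of the whole argument, since the support is only the \emph{tightest enclosing interval} of the non-empty time points and a window may expose gaps inside it; everything else is bookkeeping.
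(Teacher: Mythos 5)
Your overall plan coincides with the paper's proof, which consists of the single sentence that the claim follows by ``a straightforward structural induction on $\alpha$'', and your base case and the cases $\land$, $@_{t'}$, $\Box$ are handled correctly. In particular, you adopt the semantically faithful reading $(@_{t'}\beta)_{\mb J,t}=@_{t'}\beta_{\mb J,t'}$ of the homomorphic extension, re-anchoring the time index at $t'$; with the strictly literal reading $@_{t'}(\beta_{\mb J,t})$ the claim would already fail for windows nested below $@_{t'}$, so this choice is not cosmetic.

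However, the difficulty you defer in the $\boxplus$ case is not bookkeeping: it is a genuine gap, and it cannot be closed in the form you hope for. Windowing does \emph{not} transport equality of supports: for $\mb I=\{a\}_1\{a\}_3$ and $\mb J=\{a\}_1\{b\}_2\{a\}_3$ we have $\supp\mb I=\supp\mb J=[1,3]$, yet $\mb I[0,0;2]=\0$ has empty support while $\mb J[0,0;2]=\{b\}_2$ has support $[2,2]$. Worse, the same pair refutes the proposition itself as literally stated, not merely your auxiliary lemma. Take the normal $2$-formula $\alpha=\boxplus_{[0,0]}\Box a$ (it is $2$-consistent, since $\{a\}_2$ is a $2$-model of it) and suppose $a\notin\Gamma$. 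Then $\mb I,2\modG\alpha$ holds vacuously, because $\supp(\mb I[0,0;2])=\0$; but the translation is $\alpha_{\mb J,2}=\boxplus_{[0,0]}(\Box a)_{\mb J[0,0;2],2}=\boxplus_{[0,0]}@_2 a$, and $\mb I,2\not\modG\boxplus_{[0,0]}@_2 a$ since $a\notin I_2\cup\Gamma$. Hence no structural induction whose invariant is only ``$\supp\mb I=\supp\mb J$'' can succeed: the invariant has to be strengthened to one that survives the window clause, e.g.\ that $\mb I$ and $\mb J$ have equal supports after applying every composition of windows occurring along a branch of $\alpha$, and the proposition must then be read under that stronger hypothesis. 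That stronger hypothesis is what the paper's two applications of the proposition actually supply ($\mb J=\mb I$ itself in the proof of \yref{thm:Tp}, and $\mb J=\mm M_{\mb I,t}(\alpha)$ against $\mm{MM}_{\mb I,t}(\alpha)$, which are built in lockstep along the structure of $\alpha$, in the proof of \yref{thm:MMt}), so the downstream results can be repaired; but both your attempt and the proposition as stated break at exactly the point you identified.
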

\begin{proof} A straightforward structural induction on $\alpha$. 
\end{proof}

Interestingly, the next proposition shows that we can simulate the model operator by the partial model operator applied to an appropriate translation of the input formula.

\begin{proposition}\label{prop:mbt} For every normal $t$-formula $\alpha$, $\Mt(\alpha)=\Mt(\alpha\ibt)$ and, consequently, $\MMt(\alpha)=\Mt(\alpha\mbt)$ with $\mb M=\Mt(\alpha)$.
\end{proposition}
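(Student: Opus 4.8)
The plan is to establish the first identity $\Mt(\alpha)=\Mt(\alpha\ibt)$ by structural induction on $\alpha$ and then to read off the second identity as a corollary. Since the translation $.\ibt$ re-indexes both the stream and the reference time point inside the $\boxplus$-, $@$-, and $\Box$-cases, I would prove the slightly stronger statement that $\mm M_{\mb I,t}(\alpha)=\mm M_{\mb I,t}(\alpha_{\mb I,t})$ holds simultaneously for every stream $\mb I$ and every time point $t\geq 1$, so that the induction hypothesis is available at exactly the shifted parameters that the recursive clauses produce.

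For the induction, the base case $\alpha=a$ is immediate because $a\ibt=a$, and the conjunction case follows from the additivity $\Mt(\beta\land\gamma)=\Mt(\beta)\cup\Mt(\gamma)$ together with the two induction hypotheses. In the case $\alpha=@_{t'}\beta$ I would use that $(@_{t'}\beta)\ibt=@_{t'}\beta_{\mb I,t'}$, so that $\Mt((@_{t'}\beta)\ibt)=\mm M_{\mb I,t'}(\beta_{\mb I,t'})$, which by the induction hypothesis at $(\mb I,t')$ equals $\mm M_{\mb I,t'}(\beta)=\Mt(@_{t'}\beta)$; the case $\alpha=\boxplus\lr\beta$ is analogous, invoking the hypothesis at the windowed stream $\mb I[\ell,r;t]$. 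The crux is $\alpha=\Box\beta$: here $(\Box\beta)\ibt=\bigwedge_{t'\in\supp\mb I}@_{t'}\beta_{\mb I,t'}$, and applying additivity of $\Mt$ across this (finite) conjunction turns $\Mt((\Box\beta)\ibt)$ into $\bigcup_{t'\in\supp\mb I}\mm M_{\mb I,t'}(\beta_{\mb I,t'})$, which by the induction hypothesis at each $t'\in\supp\mb I$ collapses to $\bigcup_{t'\in\supp\mb I}\mm M_{\mb I,t'}(\beta)$---exactly the definition of $\Mt(\Box\beta)$. The degenerate subcase $\mb I=\0$ must be checked separately: both sides reduce to $\0$, matching the convention that the empty conjunction $(\Box\beta)_{\0,t}=\top$ contributes nothing to the model operator.

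For the second identity I would unfold $\MMt(\alpha)=\mm M_{\mb M,t}(\alpha)$ with $\mb M=\Mt(\alpha)$ and apply the first identity with the stream $\mb M$ in place of $\mb I$, obtaining $\mm M_{\mb M,t}(\alpha)=\mm M_{\mb M,t}(\alpha\mbt)$. The final step is the observation that the translated formula $\alpha\mbt$ is $\Box$-free, since $.\mbt$ replaces every occurrence of $\Box$ by a conjunction of $@$-guarded subformulas; hence \yref{prop:B-free} applies and lets me replace the stream $\mb M$ by $\mb I$, giving $\mm M_{\mb M,t}(\alpha\mbt)=\Mt(\alpha\mbt)$ as required.

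I expect the main obstacle to be the bookkeeping in the $\Box$-case: one must ensure that the structural induction is genuinely carried out over all pairs $(\mb I,t)$ at once, so that the hypothesis is legitimately invoked at each $t'\in\supp\mb I$ and at the windowed streams, and one must handle the empty-support case so that the $\top$ arising from the empty conjunction is consistent with $\Mt$ returning $\0$. The second identity is then essentially formal, its only real content being the $\Box$-freeness of $\alpha\mbt$, which is precisely what unlocks \yref{prop:B-free}.
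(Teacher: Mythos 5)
Your proposal is correct and takes essentially the same route as the paper: the paper also proves the first identity by structural induction on $\alpha$ (it simply calls it ``straightforward'' and omits the details you supply, namely the strengthened hypothesis quantified over all pairs $(\mb I,t)$, the case analysis, and the empty-support convention for $\top$), and it derives the second identity exactly as you do, by unfolding $\MMt(\alpha)=\mm M_{\mb M,t}(\alpha)$, applying the first identity at the stream $\mb M$, and using the $\Box$-freeness of $\alpha\mbt$ (via \yref{prop:B-free}) to conclude $\mm M_{\mb M,t}(\alpha\mbt)=\Mt(\alpha\mbt)$. No gaps; your explicit appeal to \yref{prop:B-free} in the last step is a detail the paper leaves implicit.
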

\begin{proof} The first identity can be proved by a straightforward structural induction on $\alpha$, and the second identity follows from the first, i.e., $\MMt(\alpha)=\mm M_{\mb M,t}(\alpha)=\Mt(\alpha\mbt)$.
\end{proof}

Monotone formulas inherit their name from the following property.

\begin{proposition}\label{prop:mon} For every monotone formula $\alpha$, $\mb I,t\modG\alpha$ implies $\mb J,t\modG\alpha$, for all streams $\mb I\seq\mb J$.
\end{proposition}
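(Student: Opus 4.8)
The plan is to proceed by structural induction on the monotone formula $\alpha$, where ``monotone'' means (by the definition in \yref{sec:Syntax}) that $\alpha$ is built up from atoms using only $\land$, $\lor$, $\Diamond$, $@_{t'}$, and $\boxplus\lr$, so the excluded connectives $\neg$, $\rightarrow$, and $\Box$ never occur. Throughout, I fix two streams with $\mb I\seq\mb J$, i.e.\ $I_{t'}\seq J_{t'}$ for every $t'\geq 1$, and assume $\mb I,t\modG\alpha$, aiming to conclude $\mb J,t\modG\alpha$. For the base cases, $\alpha=\top$ is immediate, and for an atom $\alpha=a$ the hypothesis $a\in I_t\cup\Gamma$ together with $I_t\seq J_t$ yields $a\in J_t\cup\Gamma$, hence $\mb J,t\modG a$.

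For the propositional cases the induction hypothesis does all the work: if $\alpha=\beta\land\gamma$ then $\mb I,t\modG\beta$ and $\mb I,t\modG\gamma$, so by the induction hypothesis $\mb J,t\modG\beta$ and $\mb J,t\modG\gamma$, giving $\mb J,t\modG\alpha$; the case $\alpha=\beta\lor\gamma$ is analogous, transferring whichever disjunct holds. The case $\alpha=@_{t'}\beta$ reduces directly to the subformula, since $\mb I,t\modG@_{t'}\beta$ means $\mb I,t'\modG\beta$, and applying the induction hypothesis at time point $t'$ gives $\mb J,t'\modG\beta$, that is $\mb J,t\modG@_{t'}\beta$.

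The two cases involving the window and diamond operators are where the structural monotonicity of streams enters. For $\alpha=\boxplus\lr\beta$, the hypothesis $\mb I,t\modG\boxplus\lr\beta$ unfolds to $\mb I[\ell,r;t],t\modG\beta$; since windowing is monotone (noted in \yref{sec:LARS}), $\mb I\seq\mb J$ gives $\mb I[\ell,r;t]\seq\mb J[\ell,r;t]$, so the induction hypothesis applied to the windowed streams yields $\mb J[\ell,r;t],t\modG\beta$, i.e.\ $\mb J,t\modG\boxplus\lr\beta$. The genuinely delicate case is $\alpha=\Diamond\beta$: here $\mb I,t\modG\Diamond\beta$ supplies a witness $t'\in\supp\mb I$ with $\mb I,t'\modG\beta$, and the induction hypothesis gives $\mb J,t'\modG\beta$, but to conclude $\mb J,t\modG\Diamond\beta$ I must certify that this same $t'$ still lies in $\supp\mb J$. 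This is exactly where the monotonicity of the support function, $\mb I\seq\mb J\Rightarrow\supp\mb I\seq\supp\mb J$ (also recorded in \yref{sec:LARS}), is indispensable: it places $t'\in\supp\mb J$ and closes the argument.

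I expect this diamond step to be the main (and essentially only) real obstacle, and it clarifies why the definition of a monotone formula bans $\Box$: for $\Box\beta$ one would instead need $\mb J,t'\modG\beta$ for \emph{all} $t'\in\supp\mb J$, yet the hypothesis only controls the smaller index set $\supp\mb I$, leaving the extra time points of $\supp\mb J\setminus\supp\mb I$ unaccounted for. The same asymmetry explains the exclusion of $\neg$ and $\rightarrow$, whose presence would reverse the direction of the inclusion and break the induction.
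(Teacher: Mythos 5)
Your proof is correct; the paper itself states this proposition without proof, treating it as routine, and your structural induction is exactly the intended argument---in particular the $\boxplus\lr$ and $\Diamond$ cases are correctly discharged via the monotonicity of the window operator and of $\supp$ recorded in \yref{sec:LARS}, which is precisely where the restriction to monotone (i.e., $\neg,\rightarrow,\Box$-free) formulas is needed. The only presentational nicety: since you apply the induction hypothesis to the windowed streams and at shifted time points, the hypothesis should be stated as quantified over \emph{all} pairs $\mb I\seq\mb J$ and all time points $t$ (rather than for the single fixed pair you open with), which is clearly what you intend.
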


We are now ready to prove our first theorem which shows that $\MMt(\alpha)$ is a $t$-model of $\alpha$ which is in some sense ``minimal'' (with respect to $\mb I$).

\begin{theorem}\label{thm:MMt} For every normal $t$-formula $\alpha$, $\MMt(\alpha)$ is a $t$-model of $\alpha$, that is,
\begin{align*} \MMt(\alpha),t\modG\alpha.
\end{align*} In case $\alpha$ is $\Box$-free, a single application of $\Mt$ suffices, that is, $\Mt(\alpha),t\modG\alpha$. Moreover, if $\mb I$ is a $t$-model of $\alpha$, then $\MMt(\alpha)\seq\mb I$.
\end{theorem}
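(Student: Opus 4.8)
The plan is to prove the three assertions in the order stated, establishing the $\Box$-free case first since it drives the general case, and handling minimality separately. Each induction I would phrase with the time point (and, for minimality, the ambient stream) universally quantified, so that the clauses for $@_{t'}$, $\boxplus\lr$ and $\Box$, which shift the evaluation point or the stream, line up with the induction hypothesis. For the $\Box$-free claim $\Mt(\alpha),t\modG\alpha$ I would induct on the structure of $\alpha$, invoking \yref{prop:B-free} to drop the dependence on $\mb I$ and write $\mm M_t(\alpha)$. The atomic case splits on whether $a\in\Gamma$ and is immediate from the definitions of $\Mt(a)$ and of $\modG a$. For $\alpha=\beta\land\gamma$ I would note that every $\Box$-free normal formula is monotone, since it contains none of $\neg,\rightarrow,\Box$; then from $\Mt(\beta)\seq\Mt(\alpha)$ and the hypothesis $\Mt(\beta),t\modG\beta$, \yref{prop:mon} gives $\Mt(\alpha),t\modG\beta$, and symmetrically for $\gamma$. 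The $@_{t'}$ case unfolds directly: $\Mt(@_{t'}\beta)=\mm M_{t'}(\beta)$, and the hypothesis at $t'$ is exactly $\mm M_{t'}(\beta),t'\modG\beta$, i.e. $\mm M_{t'}(\beta),t\modG @_{t'}\beta$.

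The delicate case, and the step I expect to be the main obstacle, is $\alpha=\boxplus\lr\beta$. Here $\Mt(\alpha)=\mm M_{\mb I[\ell,r;t],t}(\beta)=\mm M_t(\beta)$, and I must verify $\mm M_t(\beta)[\ell,r;t],t\modG\beta$ rather than merely $\mm M_t(\beta),t\modG\beta$. Since $\beta$ is monotone, discarding atoms outside the window could in principle destroy the model, so the argument must show the windowing to be vacuous. This is exactly where the standing $t$-consistency assumption (\yref{rem:incon}) enters: because $\boxplus\lr\beta$ is a $t$-formula, every atom that $\mm M_t$ is forced to place must lie inside the window $[\max\{0,t-\ell\},t+r]$, for an atom required outside the window could never be satisfied after restriction, rendering the formula $t$-inconsistent (as in the example following \yref{rem:incon}). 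Hence $\supp\mm M_t(\beta)\seq[\max\{0,t-\ell\},t+r]$, so $\mm M_t(\beta)[\ell,r;t]=\mm M_t(\beta)$ and the hypothesis transfers. I would isolate this support-containment fact as the one claim requiring genuine care.

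For the general statement I would avoid a new $\Box$-induction and instead reduce to the $\Box$-free case through the translation. By \yref{prop:mbt}, $\MMt(\alpha)=\Mt(\alpha\mbt)$ with $\mb M=\Mt(\alpha)$, where $\alpha\mbt$ is a $\Box$-free normal $t$-formula; applying the already-proved $\Box$-free claim yields $\MMt(\alpha),t\modG\alpha\mbt$. Since $\supp\MMt(\alpha)=\supp\Mt(\alpha)=\supp\mb M$ by \yref{prop:seq}, \yref{prop:ibt} evaluated at the stream $\MMt(\alpha)$ with translation stream $\mb M$ converts this into $\MMt(\alpha),t\modG\alpha$, as required.

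Finally, for minimality I would first prove by a routine structural induction that $\mb I,t\modG\alpha$ implies $\Mt(\alpha)\seq\mb I$; here the $\Box$ and $\boxplus\lr$ cases are easy precisely because $\mb I$ is assumed to be a model, so its support already covers the relevant points and the window substream satisfies $\mb I[\ell,r;t]\seq\mb I$. Then monotonicity of the model operator (\yref{prop:MMt-mon}) finishes the job: from $\Mt(\alpha)\seq\mb I$ we obtain $\MMt(\alpha)=\mm M_{\Mt(\alpha),t}(\alpha)\seq\mm M_{\mb I,t}(\alpha)=\Mt(\alpha)\seq\mb I$.
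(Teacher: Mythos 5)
Your proposal is correct and follows essentially the same route as the paper's proof: the same structural induction for the $\Box$-free case (with $t$-consistency of $\boxplus\lr\beta$ making the window restriction vacuous), the same reduction of the general case to the $\Box$-free one via $\alpha\mbt$ using \yref{prop:mbt}, \yref{prop:seq}, and \yref{prop:ibt}, and a structural induction for the minimality claim. Your only deviation---proving $\Mt(\alpha)\seq\mb I$ first and then lifting to $\MMt(\alpha)\seq\mb I$ via \yref{prop:MMt-mon}---is just a more explicit rendering of the induction the paper dismisses as ``straightforward.''
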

\begin{proof} We start with the second assertion and prove by structural induction on $\alpha$ that in case $\alpha$ is $\Box$-free, $\Mt(\alpha),t\modG\alpha$. The induction hypothesis  $\alpha=a\in\Sigma$, and the case $\alpha=@_{t'}\beta$ are straightforward. In what follows, let $\beta$ and $\gamma$ denote normal $t$-formulas. For $\alpha=\b\land\gamma$, we have $\Mt(\b\land\gamma)=\Mt(\beta)\cup\Mt(\gamma)$ and, by induction hypothesis, $\Mt(\beta),t\modG\beta$ and $\Mt(\gamma),t\modG\gamma$. Since $\beta$ and $\gamma$ are $\Box$-free, we have $\Mt(\beta)\cup\Mt(\gamma),t\modG\b\land\gamma$ as a consequence of \yref{prop:mon} (recall that $\Box$-free normal formulas are monotone). Finally, for $\alpha=\boxplus\lr\beta$ we have $\Mt(\boxplus\lr\beta)=\Mt(\beta)$ and, by induction hypothesis, $\Mt(\beta),t\modG\beta$. Since $\boxplus\lr\beta$ is $t$-consistent by assumption, $\Mt(\beta)=\Mt(\beta)[\ell,r;t]$ (cf. \yref{rem:incon} and \ref{rem:MMt}) and, hence, $\Mt(\beta)[\ell,r;t],t\modG\beta$ which is equivalent to $\Mt(\boxplus\lr\beta),t\modG\boxplus\lr\beta$. 

We now turn to the general case and prove that $\MMt(\alpha),t\modG\alpha$ holds for any normal $t$-formula $\alpha$, by first translating $\alpha$ into a $\Box$-free formula, and then referring to the first part of the proof. Let $\mb M=\Mt(\alpha)$. Since $\alpha\mbt$ is $\Box$-free, we know from the first part of the proof that
\begin{align}\label{equ:MMt-1} \Mt(\alpha\mbt),t\modG\alpha\mbt.
\end{align} By \yref{prop:mbt},
\begin{align}\label{equ:MMt-2} \MMt(\alpha)=\Mt(\alpha\mbt).
\end{align} From \yref{equ:MMt-1} and \yref{equ:MMt-2} we infer
\begin{align*} \MMt(\alpha),t\modG\alpha\mbt.
\end{align*} Now since $\supp\mb M=\supp\MMt(\alpha)$ (cf. \yref{prop:seq}), \yref{prop:ibt} proves our claim.

Finally, a straightforward structural induction on $\alpha$ shows $\MMt(\alpha)\seq\mb I$.
\end{proof}

\begin{remark}\label{rem:MMt} We want to emphasize that the requirement in \yref{thm:MMt} of $\alpha$ being $t$-consistent is essential. For instance, reconsider the 1-inconsistent normal formula $\alpha=\boxplus_{[0,0]}@_2a$ of \yref{rem:incon}, and compute $\mm{MM}_{\mb I,1}(\alpha)=\{a\}_2$ which is {\em not} a 1-model of $\alpha$.
\end{remark}



\subsection{The van Emden-Kowalski Operator}\label{sec:Tp}

We are now ready to extend the well-known van Emden-Kowalski operator to the class of stream logic programs. 

\begin{definition} We define the {\em van Emden-Kowalski operator} $\Tp t$ of $\P$ (for $\mb D$ at time point $t$), for every stream $\mb I$, by
\begin{align*} \Tp t(\mb I)=\mb D\cup\MMt(\{\mm H(\rho)\mid\rho\in\P:\mb I,t\modG\mm B(\rho)\}).
\end{align*} 
\end{definition}

As for ordinary logic programs \citep{vanEmden76}, prefixed points of the van Emden-Kowalski operator $\Tp t$ characterize the models of $\P$ (for $\mb D$ at time point $t$).

\begin{theorem}\label{thm:Tp} A stream $\mb I$ is a $t$-model of $\P$ if, and only if, $\mb I$ is a prefixed point of $\Tp t$.
\end{theorem}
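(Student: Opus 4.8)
The plan is to prove both directions by unpacking the definition of $\Tp t(\mb I)=\mb D\cup\MMt(H)$, where $H=\{\mm H(\rho)\mid\rho\in\P:\mb I,t\modG\mm B(\rho)\}$ collects the heads of the rules whose bodies fire in $\mb I$. Recall that $\mb I$ is a $t$-model of $\P$ \ioff for every rule $\rho\in\P$ we have $\mb I,t\modG\mm B(\rho)\rightarrow\mm H(\rho)$, which (by the entailment clause for $\rightarrow$) means precisely that whenever $\mb I,t\modG\mm B(\rho)$ we also have $\mb I,t\modG\mm H(\rho)$. Interpreting the set $H$ as the conjunction $\bigwedge H$ of the fired heads, this condition says exactly that $\mb I,t\modG\bigwedge H$, i.e. $\mb I$ is a $t$-model of the conjunction of all fired heads. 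So the task reduces to relating ``$\mb I$ is a $t$-model of $\bigwedge H$'' to the prefixed-point inequality $\mb D\cup\MMt(H)\seq\mb I$, under the standing assumption that $\mb I$ is an interpretation stream for $\mb D$ (so $\mb D\seq\mb I$ already holds and the $\mb D$-part of the inequality is automatic).

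First I would prove the forward direction. Suppose $\mb I$ is a $t$-model of $\P$. Then, as noted, $\mb I,t\modG\mm H(\rho)$ for every fired rule, hence $\mb I$ is a $t$-model of each conjunct and thus of $\bigwedge H$. By the minimality clause of \yref{thm:MMt}, any $t$-model of a normal $t$-formula contains $\MMt$ of that formula; applied conjunct-by-conjunct (using that $\MMt(\bigwedge H)=\bigcup_{\alpha\in H}\MMt(\alpha)$ by the $\land$-clause of \yref{def:MMt}), this yields $\MMt(H)\seq\mb I$. Combined with $\mb D\seq\mb I$ we get $\Tp t(\mb I)=\mb D\cup\MMt(H)\seq\mb I$, so $\mb I$ is a prefixed point.

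For the converse, suppose $\Tp t(\mb I)\seq\mb I$, i.e. $\mb D\cup\MMt(H)\seq\mb I$. Fix any rule $\rho$ with $\mb I,t\modG\mm B(\rho)$; I must show $\mb I,t\modG\mm H(\rho)$. By \yref{thm:MMt}, $\MMt(\mm H(\rho)),t\modG\mm H(\rho)$, so the minimal model $\MMt(\mm H(\rho))$ already satisfies the head. Since $\MMt(\mm H(\rho))\seq\MMt(H)\seq\mb I$ (the first inclusion by the monotonicity of $\MMt$ over sets of heads, \yref{prop:MMt-mon}, as $\mm H(\rho)$ is one conjunct of $H$), and since heads are normal and hence monotone formulas, \yref{prop:mon} lifts satisfaction upward along $\seq$ to give $\mb I,t\modG\mm H(\rho)$. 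As $\rho$ was an arbitrary fired rule, $\mb I,t\modG\mm B(\rho)\rightarrow\mm H(\rho)$ for all $\rho\in\P$, so $\mb I$ is a $t$-model of $\P$.

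The main obstacle I anticipate is the converse direction's use of monotonicity: it is essential that rule heads are \emph{normal} (hence monotone), so that satisfaction by the small stream $\MMt(\mm H(\rho))$ transfers up to the larger $\mb I$ via \yref{prop:mon}. This is exactly where the syntactic restriction on heads (normal $t$-formulas, $t$-consistent by \yref{rem:incon}) does real work, and I would flag that the $t$-consistency assumption is what makes \yref{thm:MMt} applicable in the first place. The remaining care is purely bookkeeping: handling the conjunction $H$ via the $\land$-clause of the model operator and keeping the $\mb D\seq\mb I$ side condition explicit throughout.
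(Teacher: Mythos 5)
Your forward direction is essentially the paper's: from $\mb I,t\modG\HP$ (the conjunction of fired heads) the inclusion $\MMt(\HP)\seq\mb I$ follows from the last part of \yref{thm:MMt}, and $\mb D\seq\mb I$ holds since $\mb I$ is an interpretation stream. One caveat: your parenthetical claim that $\MMt$ distributes over conjunction is false --- only the \emph{partial} operator $\Mt$ has the $\land$-clause. Indeed, by \yref{exa:MMt}, $\mm{MM}_{\0,1}(\Box a\land b)=\{a,b\}_1$ whereas $\mm{MM}_{\0,1}(\Box a)\cup\mm{MM}_{\0,1}(b)=\{b\}_1$. This is harmless here, since you can apply \yref{thm:MMt} to the whole conjunction at once rather than conjunct-by-conjunct.

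The converse direction, however, has a genuine gap: you assert that rule heads are ``normal and hence monotone,'' and this is false under the paper's definitions. Monotone formulas are those without $\neg,\rightarrow,\Box$; normal formulas are those without $\neg,\lor,\rightarrow,\Diamond$, so a normal head may contain $\Box$ --- e.g., the heads $\boxplus_{[1,\infty]}\Box c$ and $\boxplus_{[2,3]}\Box(a\land b)$ of the running example. And $\Box$ is genuinely non-monotone: for $a\notin\Gamma$ we have $\{a\}_1,1\modG\Box a$ and $\{a\}_1\seq\{a\}_1\{b\}_2$, yet $\{a\}_1\{b\}_2,1\not\modG\Box a$, because enlarging a stream enlarges its support, over which $\Box$ quantifies. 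Hence \yref{prop:mon} cannot lift $\MMt(\mm H(\rho)),t\modG\mm H(\rho)$ along $\MMt(\mm H(\rho))\seq\mb I$, and this failed step is precisely where the difficulty of the theorem lives (it is why the paper calls the extension of $\mm T_\P$ non-trivial). The paper closes the gap with dedicated machinery: from the prefixed-point hypothesis it first derives $\Mt(\HP)=\MMt(\HP)$ (via \yref{prop:seq} and \yref{prop:MMt-mon}), then passes to the $\Box$-free translation $\HP\ibt$ (\yref{prop:mbt}), which \emph{is} monotone, applies \yref{prop:mon} to that formula, and finally transfers satisfaction back to $\HP$ by \yref{prop:ibt}. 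Without some form of this $\Box$-elimination argument --- or an equivalent analysis showing that $\MMt$, being computed relative to $\mb I$ itself, already saturates all of $\supp\mb I$ --- your converse does not go through.
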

\begin{proof} Suppose $\mb I$ is a $t$-model of $\P$ and note that this is equivalent to
\begin{align}\label{equ:Tp-1} \mb I,t\modG\HP.
\end{align} Moreover, note that we can rewrite the van Emden-Kowalski operator more compactly as
\begin{align*} \Tp t(\mb I)=\mb D\cup\MMt(\HP).
\end{align*} So we have to show $\MMt(\HP)\seq\mb I$ (recall that $\mb D\seq\mb I$ holds by assumption), but this follows directly from \yref{equ:Tp-1} together with the last part of \yref{thm:MMt}.

For the other direction, we show that $\MMt(\HP)\seq\mb I$ implies $\mb I,t\modG\HP$. Let $\mb M=\Mt(\HP)$. By \yref{prop:seq},
\begin{align}\label{equ:Tp-2} \mb M=\Mt(\HP)\seq\MMt(\HP)=\mm M_{\mb M,t}(\HP)\seq\mb I.
\end{align} On the other hand, $\mb M\seq\mb I$ and \yref{prop:MMt-mon} imply
\begin{align}\label{equ:Tp-3} \mm M_{\mb M,t}(\HP)\seq\Mt(\HP).
\end{align} Consequently, from \yref{equ:Tp-2} and \yref{equ:Tp-3} we infer
\begin{align}\label{equ:Tp-4} \Mt(\HP)=\MMt(\HP).
\end{align} Intuitively, \yref{equ:Tp-4} means that in case $\MMt(\HP)\seq\mb I$, one application of $\Mt$ suffices (cf. \yref{thm:MMt} and \yref{exa:MMt}). Moreover, \yref{prop:mbt} implies
\begin{align}\label{equ:Tp-5} \Mt(\HP)=\Mt(\HP\ibt)\seq\mb I.
\end{align} Since $\Mt(\HP\ibt)$ is a $t$-model of $\HP\ibt$ (cf. \yref{thm:MMt}), $\Mt(\HP\ibt)\seq\mb I$ holds by \yref{equ:Tp-5}, and $\HP\ibt$ is monotone, \yref{prop:mon} and \yref{equ:Tp-5} imply
\begin{align}\label{equ:Tp-6} \mb I,t\modG\HP\ibt.
\end{align} Finally, \yref{prop:ibt} and \yref{equ:Tp-6} imply $\mb I,t\modG\HP$.
\end{proof}

\begin{example}\label{exa:Tp} Reconsider the program $\P$ of \yref{exa:running} consisting of the following rules:
\begin{align*} 
@_2 a&\lra{1}\naf @_7 c       & \boxplus_{[1,\infty]}\Box c&\lra{3}\naf @_2 a\\
\boxplus_{[\infty,0]}\Box a&\lra{2}\naf c  & \boxplus_{[2,3]}\Box(a\land b)&\lra{4}\boxplus_{[0,1]}\Diamond c,\Box d.
\end{align*} We have argued in \yref{exa:running} that the interpretation stream $$\mb I=\{a\}_1\{a,b\}_3\{a,b,c\}_4\{a,b,c\}_5\{a,b,c\}_6\{a,b,c\}_7\{a,b,c\}_8\{c\}_9\{c\}_{10}$$ of $\P$ for $\mb D=\{a\}_1\{a,b\}_5\{c\}_{10}$ and $\Gamma=\{d\}$ is a 5-model of $\P$. Now we want to rigorously prove that $\mb I$ is a 5-model of $\P$ by showing that $\mb I$ is a prefixed point of $\Tp 5$. We compute:
\begin{align*} 
\mb M &= \mm M_{\mb I,5}(\{\mm H(\rho_3),\mm H(\rho_4)\})\\
  &= \mm M_{\mb I,5}(\boxplus_{[1,\infty]}\Box c\land\boxplus_{[2,3]}\Box(a\land b))\\
  &= \mm M_{\mb I,5}(\boxplus_{[1,\infty]}\Box c)\cup\mm M_{\mb I,5}(\boxplus_{[2,3]}\Box(a\land b))\\
  &= \bigcup_{t\in\supp\mb I[1,\infty;5]}\mm M_{\mb I[1,\infty;5],t}(c)\cup\bigcup_{t\in\supp\mb I[2,3;5]}\mm M_{\mb I[2,3;5],t}(a\land b)\\
  &= \{a,b\}_3\{a,b,c\}_4\{a,b,c\}_5\{a,b,c\}_6\{a,b,c\}_7\{a,b,c\}_8\{c\}_9\{c\}_{10}
\end{align*} and
\begin{align*} 
\Tp 5(\mb I)
  &=\mb D\cup\mm{MM}_{\mb I,5}(\{\mm H(\rho_3),\mm H(\rho_4)\})\\
  &=\mb D\cup\mm M_{\mb M,5}(\boxplus_{[1,\infty]}\Box c)\cup\mm M_{\mb M,5}(\boxplus_{[2,3]}\Box(a\land b))\\
  &=\mb D\cup\bigcup_{t\in\supp\mb M[1,\infty;5]}\mm M_{\mb M[1,\infty;5],t}(c)\cup\bigcup_{t\in\supp\mb M[2,3;5]}\mm M_{\mb M[2,3;5],t}(a\land b)\\
  &=\mb D\cup\mb M\\
  &=\{a\}_1\{a,b\}_3\{a,b,c\}_4\{a,b,c\}_5\{a,b,c\}_6\{a,b,c\}_7\{a,b,c\}_8\{c\}_9\{c\}_{10}\\
  &=\mb I.
\end{align*}
\end{example}



\subsection{The Fitting Operator}\label{sec:Fp}

In the presence of negation, the van Emden-Kowalski operator is non-monotonic and cannot be iterated bottom-up. We therefore extend the (3-valued) Fitting operator \cite{Fitting02} from ordinary to stream logic programs as follows. Firstly, we define a {\em 3-valued stream} to be a pair of streams $\IJ$ with $\mb I\seq\mb J$ or, equivalently, a sequence of pairs $(I_1,J_1)(I_2,J_2)\ldots$ with $I_t\seq J_t$ for all $t\geq 1$, with the intuitive meaning that every $a\in I_t$ (resp., $a\not\in J_t$) is {\em true} (resp., {\em false}) at time point $t$, whereas every $a\in J_t\sm I_t$ is {\em undefined} at $t$.

We then define the {\em precision ordering}{\footnote{The precision ordering corresponds to the {\em knowledge ordering} $\leq_k$ in \cite{Fitting02}; cf. \cite{Denecker04}.}} $\seq_p$ on the set of all 3-valued streams by
\begin{align*} (\mb I,\mb J)\seq_p (\mb I',\mb J')\Iff\mb I\seq\mb I'\text{ and }\mb J'\seq\mb J.
\end{align*} Intuitively, $\IJ\seq_p (\mb I',\mb J')$ means that $(\mb I',\mb J')$ is a ``tighter'' interval inside $\IJ$. The maximal elements with respect to $\seq_p$ are exactly the (2-valued) streams where we identify each stream $\mb I$ with $(\mb I,\mb I)$. Note that since distinct streams have no upper bound with respect to the precision ordering, the set of all 3-valued streams is not a lattice.

We extend the entailment relation to 3-valued streams as follows.

\begin{definition}\label{def:3-val} For every 3-valued stream $\IJ$ and formula $\alpha$,
\begin{align*} \IJ,t\modG\alpha\Iff\mb K,t\modG\alpha\text{ for every }\mb K\in [\mb I,\mb J].
\end{align*} 
\end{definition}

The intuition behind \yref{def:3-val} is as follows. Recall that a formula $\alpha$ containing $\neg,\rightarrow,$ or $\Box$ may be non-monotone in the sense of \yref{prop:mon}, and in this case we have to take all possible extensions $\mb K\in [\mb I,\mb J]$ of $\mb I$ into account.


Now define the {\em Fitting operator} $\Fp t$ of $\P$ for $\mb D$ at time point $t$, for every 3-valued stream $\IJ$, by
\begin{align*} \Fp t\IJ=\mb D\cup\MMt(\{\mm H(\rho)\mid\rho\in\P:\IJ,t\modG\mm B(\rho)\}).
\end{align*} 

The only difference between $\Fp t$ and $\Tp t$ is that $\Fp t$ evaluates the body of a rule in a 3-valued stream, which guarantees the monotonicity of $\Fp t$ with respect to the precision ordering (cf. \yref{prop:Fp-mon}). As for ordinary logic programs, the Fitting operator encapsulates the van Emden-Kowalski operator.

\begin{proposition}\label{prop:Fp-Tp} For every stream $\mb I$, $\Fp t(\mb I,\mb I)=\Tp t(\mb I)$.
\end{proposition}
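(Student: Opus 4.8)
The plan is to reduce the $3$-valued entailment appearing in $\Fp t$ to ordinary entailment, by observing that the degenerate $3$-valued stream $(\mb I,\mb I)$ admits only a single completion. Comparing the two definitions, the operators differ only in how rule bodies are evaluated---via $\mb I,t\modG\mm B(\rho)$ for $\Tp t$, and via $(\mb I,\mb I),t\modG\mm B(\rho)$ for $\Fp t$---while the outer $\mb D\cup\MMt(\cdots)$ wrapper is identical. Hence it suffices to show that for every rule $\rho\in\P$,
\[
(\mb I,\mb I),t\modG\mm B(\rho)\Iff\mb I,t\modG\mm B(\rho),
\]
after which the equality of the two sets of firing rule heads, and thus of the two operator values, is immediate.

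First I would unfold \yref{def:3-val}, which gives that $(\mb I,\mb I),t\modG\mm B(\rho)$ holds if, and only if, $\mb K,t\modG\mm B(\rho)$ for every $\mb K\in[\mb I,\mb I]$. The only real step is then to identify the interval $[\mb I,\mb I]$. Since the substream relation $\seq$ is a partial order, and in particular antisymmetric, the interval $[\mb I,\mb I]=\{\mb K\mid\mb I\seq\mb K\seq\mb I\}$ collapses to the singleton $\{\mb I\}$. Substituting this back, the universal quantifier over $[\mb I,\mb I]$ now ranges over the single element $\mb I$, so the $3$-valued entailment $(\mb I,\mb I),t\modG\mm B(\rho)$ reduces precisely to $\mb I,t\modG\mm B(\rho)$.

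With this equivalence in hand, I would conclude
\[
\{\rho\in\P:(\mb I,\mb I),t\modG\mm B(\rho)\}=\{\rho\in\P:\mb I,t\modG\mm B(\rho)\},
\]
so the formulas passed to $\MMt$ coincide and therefore $\Fp t(\mb I,\mb I)=\Tp t(\mb I)$. I do not expect any genuine obstacle here: the single substantive observation is the collapse of $[\mb I,\mb I]$ to $\{\mb I\}$, which rests on antisymmetry of $\seq$ and reflects the intuition stated after \yref{def:3-val}---namely that the $3$-valued machinery is needed only to cope with genuinely undefined atoms, of which $(\mb I,\mb I)$ has none.
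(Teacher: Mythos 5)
Your proof is correct and is exactly the argument the paper has in mind---the paper in fact states this proposition without any written proof, treating it as immediate from the definitions. Your one substantive step, the collapse of the interval $[\mb I,\mb I]$ to the singleton $\{\mb I\}$ by antisymmetry of $\seq$, so that the $3$-valued body condition $(\mb I,\mb I),t\modG\mm B(\rho)$ reduces to $\mb I,t\modG\mm B(\rho)$ while the wrapper $\mb D\cup\MMt(\cdots)$ is literally the same in both operators, is precisely the intended justification.
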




\section{Fixed Point Semantics}\label{sec:semantics}

In this section, we define a fixed point semantics for the class of stream logic programs in terms of the Fitting operator defined above. More precisely, we first show that the Fitting operator is monotone with respect to the precision ordering, and conclude that certain least fixed points, the so-called $\Fp t$-answer streams, exist (cf. \yref{def:Fp-answer}). Then we compare our constructive semantics to the FLP-style semantics of \cite{Beck18} (cf. \yref{thm:main} and \yref{thm:level}).

\begin{proposition}\label{prop:Fp-mon} The Fitting operator $\Fp t$ is monotone.
\end{proposition}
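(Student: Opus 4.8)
The plan is to unfold the definition of monotonicity with respect to the precision ordering and to reduce the whole statement to the two monotonicity clauses of the model operator recorded in \yref{prop:MMt-mon}. Concretely, I would fix two $3$-valued streams with $\IJ\seq_p(\mb I',\mb J')$, that is $\mb I\seq\mb I'$ and $\mb J'\seq\mb J$, and aim to prove $\Fp t\IJ\seq\Fp t(\mb I',\mb J')$. The first thing to pin down is that in $\Fp t\IJ=\mb D\cup\MMt(\ldots)$ the reference stream of the model operator is the first component $\mb I$ (consistent with \yref{prop:Fp-Tp}), so that the second component $\mb J$ enters only through the body-entailment test; this clean separation is what makes the two effects of tightening the interval pull the output in the same $\seq$-direction.

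The geometric core is the interval containment $[\mb I',\mb J']\seq[\mb I,\mb J]$: whenever $\mb I'\seq\mb K\seq\mb J'$, transitivity with $\mb I\seq\mb I'$ and $\mb J'\seq\mb J$ yields $\mb I\seq\mb K\seq\mb J$. Tightening the bounds therefore shrinks the set of completions $\mb K$ that \yref{def:3-val} quantifies over. From this I would derive the key monotonicity of the fired rule heads: since $\IJ,t\modG\mm B(\rho)$ asks $\mb K,t\modG\mm B(\rho)$ for \emph{every} $\mb K\in[\mb I,\mb J]$, the same universal condition is inherited on the smaller interval $[\mb I',\mb J']$, so $(\mb I',\mb J'),t\modG\mm B(\rho)$. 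Hence
$$H:=\{\mm H(\rho)\mid\rho\in\P:\IJ,t\modG\mm B(\rho)\}\seq\{\mm H(\rho)\mid\rho\in\P:(\mb I',\mb J'),t\modG\mm B(\rho)\}=:H'.$$

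The remaining step is to chain the two clauses of \yref{prop:MMt-mon}. Enlarging the reference stream from $\mb I$ to $\mb I'$ gives $\MMt(H)\seq\mm{MM}_{\mb I',t}(H)$, and enlarging the argument set from $H$ to $H'$ (reading a finite set of heads as the conjunction $\bigwedge H$, to which the proposition applies directly) gives $\mm{MM}_{\mb I',t}(H)\seq\mm{MM}_{\mb I',t}(H')$; composing, $\MMt(H)\seq\mm{MM}_{\mb I',t}(H')$. Taking the union with the fixed data stream $\mb D$ preserves inclusion, so $\Fp t\IJ\seq\Fp t(\mb I',\mb J')$, as desired.

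I expect the body-entailment step to be the only real obstacle. Because bodies may contain the non-monotone connectives $\neg,\rightarrow,\Box$, one cannot appeal to any monotonicity of the individual formulas (indeed that is precisely why \yref{def:3-val} quantifies over all completions); the argument must instead rest squarely on the interval containment. Once that is in place, the role of $\mb I$ as reference stream and the role of $\mb J$ as the interval cap are seen to push $\MMt$ in the same $\seq$-direction, and the appeal to \yref{prop:MMt-mon} is routine bookkeeping.
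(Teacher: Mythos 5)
Your proposal is correct and takes essentially the same approach as the paper's proof: the interval containment $[\mb I',\mb J']\seq[\mb I,\mb J]$ gives preservation of 3-valued body entailment via \yref{def:3-val}, and then \yref{prop:MMt-mon} is applied in both of its clauses (monotonicity in the reference stream and in the set of fired rule heads) to conclude $\Fp t\IJ\seq\Fp t(\mb I',\mb J')$. The paper's proof is simply a compressed version of your argument, leaving the two applications of \yref{prop:MMt-mon} and the interval-shrinking step implicit.
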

\begin{proof} Let $\IJ$ and $(\mb I',\mb J')$ be 3-valued streams with $\IJ\seq_p (\mb I',\mb J')$. For an arbitrary rule $\rho\in\P$, $\IJ,t\modG\mm B(\rho)$ implies $(\mb I',\mb J'),t\modG\mm B(\rho)$ as a direct consequence of \yref{def:3-val}. Finally, \yref{prop:MMt-mon} implies $\Fp t\IJ\seq\Fp t(\mb I',\mb J')$.
\end{proof}

A consequence of \yref{prop:Fp-mon} is that in case $\mb I$ is a $t$-model of $\P$, $\Fp t(\,.\,,\mb I)$ is a monotone operator on the {\em complete} lattice $[\0,\mb I]$, since for every $\mb K\in [\0,\mb I]$, $$\Fp t(\mb K,\mb I)\seq\Fp t(\mb I,\mb I)=\Tp t(\mb I)\seq\mb I$$ holds by \yref{prop:Fp-Tp} and \yref{thm:Tp}. 

Define the operator $\Fp t\t$ on the set of all $t$-models of $\P$ by $$\Fp t\t(\mb I)=\lfp\Fp t(\,.\,,\mb I).$$ The soundness of $\Fp t\t$ is justified by the well-known Knaster-Tarski theorem which guarantees the existence of least fixed points of monotone operators on complete lattices.

We are now ready to formulate our fixed point semantics.

\begin{definition}\label{def:Fp-answer} We call every $t$-model $\mb I$ of $\P$ (for $\mb D$) a {\em $\Fp t$-answer stream} if $\mb I$ is a fixed point of $\Fp t\t$.
\end{definition}

For readers not familiar with the fixed point theory of logic programming, we briefly recall the basic intuitions behind \yref{def:Fp-answer} in the setting of ordinary logic programs. For the moment, let $\P$ be an ordinary program, and let $I$ be an interpretation of $\P$. The {\em Gelfond-Lifschitz reduct} of $\P$ with respect to $I$ is defined by{\footnote{Here $\mm B^-(\rho)$ denotes the negated atoms in the body of $\rho$, and $\mm B^+(\rho)$ denotes $\mm B(\rho)\sm\mm B^-(\rho)$.}} $$\P^I=\left\{\mm H(\rho)\la\mm B^+(\rho)\mid\rho\in\P:I\cap\mm B^-(\rho)=\0\right\}.$$ We call $I$ an {\em answer set} \citep{Gelfond91} of $\P$ if $I$ is the least model of $\P^I$, which coincides with $\lfp\mm T_{\P^I}$. So the computation of answer sets according to \cite{Gelfond91} is a two-step process, and \cite{Fitting02} showed how these two steps can be emulated by a single (monotone) operator, namely the Fitting operator $\Phi_\P$. Specifically, the identity $\Phi_\P(\,.\,,I)=\mm T_{\P^I}$ implies that $I$ is an answer set of $\P$ if and only if $I$ is the least fixed point of $\Phi_\P(\,.\,,I)$ or, equivalently, if $I$ is a fixed point of $\Phi_\P\t$. It is now clear that \yref{def:Fp-answer} is an extension of the ordinary answer set semantics to stream logic programs.

\begin{proposition} For an ordinary program $\P$, $I$ is an answer set of $\P$ if, and only if, $\mb I=I_t$ with $I_t=I$ is a $t$-answer stream of $\P$, for every $t\geq 1$.
\end{proposition}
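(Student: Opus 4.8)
The plan is to reduce the statement to the classical coincidence, for normal programs, of the minimal-model (FLP) and Gelfond--Lifschitz answer set semantics, specialised to a single time point. I fix an arbitrary $t\geq 1$ and take $\mb D=\Gamma=\0$, as the ordinary setting requires. Since $\P$ is ordinary it contains no window or modal operators, so the refined entailment of \yref{sec:Refined} collapses to its propositional clauses and only ever inspects the $t$-th component. The first step is the dictionary: as $\mb I=I_t$ is supported only at $t$, its substreams are exactly the streams $\mb J=J_t$ with $J\seq I$, in bijection with the subsets $J\seq I$; under this bijection $\mb J,t\modG\mm B(\rho)$ holds \ioff $\mm B^+(\rho)\seq J$ and $J\cap\mm B^-(\rho)=\0$, and $\mb J,t\modG\mm H(\rho)$ holds \ioff $\mm H(\rho)\in J$. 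Hence the reduct $\P\It=\{\rho\in\P\mid\mm B^+(\rho)\seq I,\ I\cap\mm B^-(\rho)=\0\}$ is the FLP reduct of $\P$ at $I$, and $\mb J$ is a $t$-model of $\P\It$ \ioff $J$ satisfies every rule of $\P\It$. Nothing below depends on the value of $t$, so arguing for this fixed $t$ suffices.

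After this translation the goal reads: $I$ is the least model of the definite program $\P^I$ \ioff $I$ is a $\seq$-minimal model of $\P\It$. I would first note that $I$ is a model of either reduct exactly when $I\models\P$: for $\rho$ with $I\cap\mm B^-(\rho)=\0$ the two notions of ``body satisfied by $I$'' coincide, so $I\models\P^I$, $I\models\P$ and $I\models\P\It$ are equivalent. Since $\P^I$ is definite it has a least model equal to its unique $\seq$-minimal model, so ``$I$ is a Gelfond--Lifschitz answer set'' means exactly ``$I$ is the $\seq$-minimal model of $\P^I$''.

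The crux is comparing the minimal models of the two reducts, which I would do via two one-directional containments. For any $J$, $J\models\P^I$ implies $J\models\P\It$: if $\rho\in\P\It$ then $I\cap\mm B^-(\rho)=\0$, so $\mm H(\rho)\la\mm B^+(\rho)\in\P^I$, and whenever $J$ satisfies $\mm B(\rho)$ we have $\mm B^+(\rho)\seq J$, forcing $\mm H(\rho)\in J$. In the other direction, and only for $J\seq I$, $J\models\P\It$ implies $J\models\P^I$: given $\mm H(\rho)\la\mm B^+(\rho)\in\P^I$ with $\mm B^+(\rho)\seq J$, the inclusions $\mm B^+(\rho)\seq J\seq I$ and $J\cap\mm B^-(\rho)\seq I\cap\mm B^-(\rho)=\0$ place $\rho$ in $\P\It$ with its body satisfied by $J$, whence $\mm H(\rho)\in J$. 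Combining these, a proper submodel $J\sneq I$ of either reduct is a proper submodel of the other, so $I$ is $\seq$-minimal for $\P\It$ \ioff it is least for $\P^I$; together with the previous paragraph this is the claimed equivalence, uniformly in $t$.

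The main obstacle is conceptual rather than computational: $\P\It$ (the FLP reduct, full body re-evaluated and negation retained) and $\P^I$ (the Gelfond--Lifschitz reduct, definite, negations deleted) are genuinely different programs, and the equivalence of their minimal models is precisely the classical FLP$=$GL fact for normal programs; the delicate point is that the two containments hold only in their stated directions and, for the second, only for witnesses $J\seq I$, which is exactly the regime that substream minimality in \yref{def:answer} enforces, so no generality is lost. I note finally that if the statement is instead read with the fixed-point notion of \yref{def:Fp-answer} in place of \yref{def:answer}, the same conclusion is even more immediate: for ordinary $\P$ the operator $\Fp t(\,.\,,\mb I)$ restricts to the van Emden--Kowalski operator $\mm T_{\P^I}$, so $\Fp t\t(\mb I)=\lfp\mm T_{\P^I}$ is the least model of $\P^I$ and $\mb I$ is its fixed point \ioff $I$ is a Gelfond--Lifschitz answer set---so for ordinary programs all of these notions coincide.
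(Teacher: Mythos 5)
Your proof is correct. It is worth noting that the paper gives no explicit proof of this proposition at all: it is stated as an immediate consequence of the preceding recollection of Fitting's identity $\Phi_\P(\,.\,,I)=\mm T_{\P^I}$, by which $I$ is a Gelfond--Lifschitz answer set of an ordinary $\P$ \ioff $I$ is a fixed point of $\Phi_\P\t$ --- an argument that, strictly speaking, justifies the claim for $\Fp t$-answer streams (\yref{def:Fp-answer}) rather than for $t$-answer streams in the literal sense of \yref{def:answer}. Your proposal does more, and by a different route: your main argument takes the literal FLP reading and proves from scratch the classical coincidence of FLP- and Gelfond--Lifschitz answer sets for normal programs, via the two one-directional containments between models of the reduct $\P\It$ and of $\P^I$ (correctly observing that the second containment holds only for witnesses $J\seq I$, which is precisely the regime enforced by substream minimality), after a careful collapse of the stream machinery to a single time point ($\MMt(a)=\{a\}_t$ when $\Gamma=\0$, ordinary rule bodies inspect only $I_t$, substreams of $I_t$ are exactly the $J_t$ with $J\seq I$). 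Your closing paragraph then recovers the paper's implicit route for the fixed-point reading, using that $\Fp t(\,.\,,\mb I)$ restricts to $\mm T_{\P^I}$ for ordinary programs --- which is indeed how the surrounding text intends the proposition to be read. What each approach buys: the paper's is a one-line corollary of Fitting's emulation but directly covers only the $\Fp t$ notion; yours is elementary, self-contained, and covers both notions at once, thereby also making explicit that the two semantics --- which differ in general by \yref{thm:main} and \yref{exa:R2} --- coincide on ordinary programs, so the proposition is true under either reading of ``answer stream.''
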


We illustrate our fixed point semantics with the following example.

\begin{example}\label{exa:Fp} Reconsider the program $\P$ of \yref{exa:running} consisting of the following rules:
\begin{align*} 
@_2 a&\lra{1}\naf @_7 c       & \boxplus_{[1,\infty]}\Box c&\lra{3}\naf @_2 a\\
\boxplus_{[\infty,0]}\Box a&\lra{2}\naf c  & \boxplus_{[2,3]}\Box(a\land b)&\lra{4}\boxplus_{[0,1]}\Diamond c,\Box d.
\end{align*} We have argued in \yref{exa:running} that the interpretation stream $$\mb I=\{a\}_1\{a,b\}_3\{a,b,c\}_4\{a,b,c\}_5\{a,b,c\}_6\{a,b,c\}_7\{a,b,c\}_8\{c\}_9\{c\}_{10}$$ of $\P$ for $\mb D=\{a\}_1\{a,b\}_5\{c\}_{10}$ and $\Gamma=\{d\}$ is a 5-answer stream of $\P$. We now want to apply our tools from above to rigorously prove that $\mb I$ is a $\Fp 5$-answer stream by computing $\Fp 5\t(\mb I)$ bottom-up as follows. We start the computation with $\mb I_0=\0$:
\begin{align*} 
\Fp 5(\0,\mb I)= \mb D\cup\mm{MM}_{\0,5}(\mm H(\rho_3))
  = \mb D\cup\mm{MM}_{\0,5}(\boxplus_{[1,\infty]}\Box c)
  = \mb D\cup\mm{MM}_{\0[1,\infty;5],5}(\Box c)
  = \mb D
\end{align*} where the last equality follows from $\0[1,\infty;5]=\0$ and $\mm{MM}_{\0,5}(\Box c)=\0$. Then we continue the computation with $\mb I_1=\mb D$:
\begin{align*} 
\Fp 5(\mb I_1,\mb I)
  &= \mb D\cup\mm{MM}_{\mb I_1,5}(\mm H(\rho_3))\\
  &= \mb D\cup\mm{MM}_{\mb I_1,5}(\boxplus_{[1,\infty]}\Box c)\\
  &= \mb D\cup\mm{MM}_{\mb I_1[1,\infty;5],5}(\Box c)\\
  &= \mb D\cup\mm M_{\mm M_{\mb I_1[1,\infty;5],5}(\Box c),5}(\Box c)\\
  &= \mb D\cup\mm M_{\{c\}_4\ldots\{c\}_{10},5}(\Box c)\\
  &= \mb D\cup\{c\}_4\ldots\{c\}_{10}\\
  &= \{a\}_1\{c\}_4\{a,b,c\}_5\{c\}_6\{c\}_7\{c\}_8\{c\}_9\{c\}_{10}\\
  &= \mb I_2.
\end{align*} For the third iteration, we first compute
\begin{align*}
\mb M
  &=\mm M_{\mb I_2,5}(\{\mm H(\rho_3),\mm H(\rho_4)\})\\
  &=\mm M_{\mb I_2,5}(\boxplus_{[1,\infty]}\Box c\land\boxplus_{[2,3]}\Box(a\land b))\\
  &=\mm M_{\mb I_2,5}(\boxplus_{[1,\infty]}\Box c)\cup\mm M_{\mb I_2,5}(\boxplus_{[2,3]}\Box(a\land b))\\
  &=\mm M_{\mb I_2[1,\infty;5],5}(\Box c)\cup\mm M_{\mb I_2[2,3;5],5}(\Box(a\land b))\\
  &=\{a,b\}_3\{a,b,c\}_4\{a,b,c\}_5\{a,b,c\}_6\{a,b,c\}_7\{a,b,c\}_8\{c\}_9\{c\}_{10}
\end{align*} and then:
\begin{align*} 
\Fp 5(\mb I_2,\mb I)
  &= \mb D\cup\mm{MM}_{\mb I_2,5}(\{\mm H(\rho_3),\mm H(\rho_4)\})\\
  &= \mb D\cup\mm{MM}_{\mb I_2,5}(\boxplus_{[1,\infty]}\Box c\land\boxplus_{[2,3]}\Box(a\land b))\\
  &= \mb D\cup\mm M_{\mb M,5}(\boxplus_{[1,\infty]}\Box c)\cup\mm M_{\mb M,5}(\boxplus_{[2,3]}\Box(a\land b))\\
  &= \mb D\cup\mm M_{\mb M[1,\infty;5]}(\Box c)\cup\mm M_{\mb M[2,3;5],5}(\Box(a\land b))\\
  &= \mb D\cup\mb M\\
  &= \{a\}_1\{a,b\}_3\{a,b,c\}_4\{a,b,c\}_5\{a,b,c\}_6\{a,b,c\}_7\{a,b,c\}_8\{c\}_9\{c\}_{10}\\
  &= \mb I_3.
\end{align*} Finally, we verify that $\mb I=\mb I_3$ is a fixed point of (cf. \yref{exa:Tp} and \yref{prop:Fp-Tp}):
\begin{align*}
\Fp 5(\mb I,\mb I)=\Tp 5(\mb I)=\mb I.
\end{align*} In summary, the above computations show that $\mb I$ is a fixed point of $\Fp 5\t$ or, equivalently, a $\Fp 5$-answer stream.
\end{example}

We now wish to relate our fixed point semantics from above to the FLP-style semantics of \cite{Beck18} presented in \yref{sec:LARS}. Firstly, we prove some auxiliary lemmata.

\begin{lemma}\label{lem:3-val} Let $\IJ$ be a 3-valued stream, and let $\mb K\in[\mb I,\mb J]$. Then, $\Phi_{\P^{\mb K,t},\mb D,t}\IJ=\Fp t\IJ$.
\end{lemma}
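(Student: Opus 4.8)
The plan is to unfold both sides of the claimed equality $\Phi_{\P^{\mb K,t},\mb D,t}\IJ = \Fp t\IJ$ directly from the definition of the Fitting operator and reduce the identity to a statement about which rules fire. Both operators have the shape $\mb D \cup \MMt(\{\mm H(\rho) \mid \rho \text{ fires}\})$, and the only difference is the program over which $\rho$ ranges: on the right it is $\P$, while on the left it is the reduct $\P^{\mb K,t} = \{\rho \in \P \mid \mb K,t\modG\mm B(\rho)\}$. Since the model operator $\MMt$ depends only on the set of rule heads it is applied to, it suffices to show that the set of heads contributed to each side is identical. Thus the whole lemma collapses to the following set equality:
\begin{align*}
\{\mm H(\rho)\mid\rho\in\P^{\mb K,t}:\IJ,t\modG\mm B(\rho)\}
=\{\mm H(\rho)\mid\rho\in\P:\IJ,t\modG\mm B(\rho)\}.
\end{align*}

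First I would observe that the left-hand set is trivially contained in the right-hand set, since $\P^{\mb K,t}\seq\P$. For the reverse inclusion, I would take any $\rho\in\P$ with $\IJ,t\modG\mm B(\rho)$ and show that $\rho\in\P^{\mb K,t}$, i.e. that $\mb K,t\modG\mm B(\rho)$. This is exactly where the hypothesis $\mb K\in[\mb I,\mb J]$ enters: by \yref{def:3-val}, $\IJ,t\modG\mm B(\rho)$ means $\mb K',t\modG\mm B(\rho)$ for \emph{every} $\mb K'\in[\mb I,\mb J]$, and in particular for $\mb K'=\mb K$. Hence $\mb K,t\modG\mm B(\rho)$, so $\rho$ survives into the reduct $\P^{\mb K,t}$, and its head is retained on the left-hand side. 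This establishes the set equality, and applying $\mb D\cup\MMt(\,\cdot\,)$ to both sides yields the lemma.

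The only subtlety worth checking is that the reduct $\P^{\mb K,t}$ is formed using the two-valued entailment $\modG$ at $\mb K$ (as in the definition of $\P\It$ in \yref{sec:Refined}), whereas the firing condition in $\Fp t$ uses the three-valued entailment at $\IJ$; the argument above is precisely the bridge between the two, since membership $\mb K\in[\mb I,\mb J]$ converts the universally-quantified three-valued satisfaction into the particular two-valued satisfaction needed for the reduct. I expect \textbf{no substantial obstacle} here: the statement is essentially a bookkeeping identity, and the single nontrivial ingredient is the instantiation $\mb K'=\mb K$ in \yref{def:3-val}. The result is the stream-reasoning analogue of the classical identity $\Phi_\P(\,.\,,I)=\mm T_{\P^I}$ recalled in the discussion after \yref{def:Fp-answer}, and its role will be to let us replace the Fitting operator by a van Emden--Kowalski operator over a fixed reduct when relating our fixed point semantics to the FLP-style answer streams.
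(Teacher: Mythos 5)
Your proof is correct, and it takes a somewhat more direct route than the paper's, though both pivot on the same key observation: by \yref{def:3-val}, $\IJ,t\modG\mm B(\rho)$ is a universally quantified statement over $[\mb I,\mb J]$, so instantiating at $\mb K\in[\mb I,\mb J]$ shows that every rule firing three-valuedly at $\IJ$ survives into the reduct $\P^{\mb K,t}$. The difference is in how this is packaged. You turn it into an exact set equality of the contributing heads,
$\{\mm H(\rho)\mid\rho\in\P^{\mb K,t}:\IJ,t\modG\mm B(\rho)\}=\{\mm H(\rho)\mid\rho\in\P:\IJ,t\modG\mm B(\rho)\}$,
after which the lemma follows by applying $\mb D\cup\MMt(\,\cdot\,)$ to identical arguments; nothing beyond the definitions is used. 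The paper instead introduces the auxiliary program $\P^{\IJ,t}=\{\rho\in\P\mid\IJ,t\modG\mm B(\rho)\}$, derives the program inclusions $\P^{\IJ,t}\seq\P^{\mb K,t}\seq\P$, converts these into inclusions of operator values via the monotonicity of $\MMt$ (\yref{prop:MMt-mon}), and closes with a squeeze: the two ends of the chain $\Phi_{\P^{\IJ,t},\mb D,t}\IJ\seq\Phi_{\P^{\mb K,t},\mb D,t}\IJ\seq\Fp t\IJ$ coincide by definition, forcing the middle term to equal both. Your version buys independence from \yref{prop:MMt-mon} and in fact proves something marginally stronger (the very same rules fire on both sides, not merely that the resulting streams agree), while the paper's version expresses the argument through inclusions between reducts, which is the shape in which the lemma is exploited in the proof of \yref{thm:main}; either argument is a sound proof of the statement.
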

\begin{proof} Define $\P^{\IJ,t}=\{\rho\in\P\mid \IJ,t\modG\mm B(\rho)\}.$ As a direct consequence of 3-valued entailment (cf. \yref{def:3-val}), we have the following inclusions:
\begin{align}\label{equ:inclusions} \P^{\IJ,t}\seq\P^{\mb K,t}\seq\P.
\end{align} By the monotonicity of $\MMt$ (cf. \yref{prop:MMt-mon}), $\Phi_{\mr R,\mb D,t}\IJ\seq\Fp t\IJ$ holds whenever $\R\seq\P$, for all programs $\P$ and $\R$. Therefore, we can conclude from \yref{equ:inclusions}:
\begin{align}\label{equ:inclusions2} \Phi_{\P^{\IJ,t},\mb D,t}\IJ\seq\Phi_{\P^{\mb K,t},\mb D,t}\IJ\seq\Fp t\IJ.
\end{align} Note that by definition, we have $\Phi_{\P^{\IJ,t},\mb D,t}\IJ=\Fp t\IJ$ which together with \yref{equ:inclusions2} entails $\Phi_{\P^{\mb K,t},\mb D,t}\IJ=\Fp t\IJ$.
\end{proof}

\begin{lemma}\label{lem:lower} For every prefixed point $\mb K\seq\mb I$ of $\Tp t$, $\Fp t\t(\mb I)\seq\mb K$.
\end{lemma}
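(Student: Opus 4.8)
The plan is to show that $\mb K$ is a prefixed point of the monotone operator $\Fp t(\,.\,,\mb I)$ on the complete lattice $[\0,\mb I]$, and then to appeal to the Knaster--Tarski theorem, whose least fixed point of a monotone operator on a complete lattice is the least among its prefixed points. Since $\Fp t\t(\mb I)=\lfp\Fp t(\,.\,,\mb I)$, this would immediately give $\Fp t\t(\mb I)\seq\mb K$. First I would check that the hypotheses put us in the right setting: $\mb I$ is a $t$-model of $\P$ (so that $\Fp t(\,.\,,\mb I)$ is indeed a monotone self-map of $[\0,\mb I]$, as established right after \yref{prop:Fp-mon}), and $\mb K\seq\mb I$ guarantees $\mb K\in[\0,\mb I]$, so that it even makes sense to ask whether $\mb K$ is a prefixed point.

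The heart of the argument is therefore the single inclusion $\Fp t(\mb K,\mb I)\seq\mb K$. First I would compare the sets of rules that fire. By \yref{def:3-val}, $(\mb K,\mb I),t\modG\mm B(\rho)$ means $\mb L,t\modG\mm B(\rho)$ for every $\mb L\in[\mb K,\mb I]$; since $\mb K\seq\mb I$ we have $\mb K\in[\mb K,\mb I]$, so firing of $\rho$ under the $3$-valued stream $(\mb K,\mb I)$ entails firing of $\rho$ under the $2$-valued stream $\mb K$. Hence the head set of $\Fp t(\mb K,\mb I)$ is contained in that of $\Tp t(\mb K)$. Applying the monotonicity of the model operator in its formula-set argument (\yref{prop:MMt-mon}) and then \yref{prop:Fp-Tp}, I obtain
\begin{align*}
\Fp t(\mb K,\mb I)
  &=\mb D\cup\mm{MM}_{\mb K,t}(\{\mm H(\rho)\mid\rho\in\P:(\mb K,\mb I),t\modG\mm B(\rho)\})\\
  &\seq\mb D\cup\mm{MM}_{\mb K,t}(\{\mm H(\rho)\mid\rho\in\P:\mb K,t\modG\mm B(\rho)\})\\
  &=\Tp t(\mb K).
\end{align*}
Since $\mb K$ is by hypothesis a prefixed point of $\Tp t$, we have $\Tp t(\mb K)\seq\mb K$, which closes the chain $\Fp t(\mb K,\mb I)\seq\Tp t(\mb K)\seq\mb K$.

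With $\Fp t(\mb K,\mb I)\seq\mb K$ established and $\mb K\in[\0,\mb I]$, $\mb K$ is a prefixed point of $\Fp t(\,.\,,\mb I)$ in the complete lattice $[\0,\mb I]$, and Knaster--Tarski then yields $\lfp\Fp t(\,.\,,\mb I)\seq\mb K$, that is, $\Fp t\t(\mb I)\seq\mb K$, as required. I expect the only genuine subtlety to be the step linking $3$-valued and $2$-valued firing: one must keep straight that the \emph{head} evaluation in $\Fp t(\mb K,\mb I)$ uses exactly the first component $\mb K$ (so that \yref{prop:Fp-Tp} applies verbatim and the two $\mm{MM}_{\mb K,t}$ differ only in their formula arguments), whereas it is the \emph{body} evaluation that is weakened by passing from the whole interval $[\mb K,\mb I]$ to the single point $\mb K\in[\mb K,\mb I]$. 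Everything else is routine monotonicity bookkeeping.
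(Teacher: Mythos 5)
Your proof is correct, and it rests on the same core inequality as the paper's, but the concluding mechanism is genuinely different. The paper argues operationally: it computes $\Fp t\t(\mb I)$ bottom-up from $\mb K_0=\0$ and shows inductively that every iterate stays below $\mb K$, since $(\mb K_i,\mb I)\seq_p(\mb K,\mb K)$ and hence $\mb K_{i+1}=\Fp t(\mb K_i,\mb I)\seq\Fp t(\mb K,\mb K)=\Tp t(\mb K)\seq\mb K$ by \yref{prop:Fp-mon}, \yref{prop:Fp-Tp}, and the prefixed-point hypothesis; it then concludes that the limit of the iteration is contained in $\mb K$. You instead apply that inequality once, to $\mb K$ itself, obtaining $\Fp t(\mb K,\mb I)\seq\Tp t(\mb K)\seq\mb K$, so that $\mb K$ is a prefixed point of $\Fp t(\,.\,,\mb I)$ on the complete lattice $[\0,\mb I]$, and you finish with the Knaster--Tarski characterization of $\lfp$ as the least prefixed point. (Incidentally, your unfolding of definitions to get $\Fp t(\mb K,\mb I)\seq\Tp t(\mb K)$ re-proves a special case of \yref{prop:Fp-mon}; citing it with $(\mb K,\mb I)\seq_p(\mb K,\mb K)$ would give the inclusion in one line.) What your route buys: it is a single order-theoretic step and sidesteps the paper's informal ``and so on \ldots\ the limit'' argument, which, for a merely monotone operator on a possibly infinite lattice, strictly speaking needs transfinite iteration to guarantee that the iteration actually reaches $\lfp\Fp t(\,.\,,\mb I)$. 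What the paper's route buys: it stays within the bottom-up, computational reading of $\Fp t\t$ that the rest of the paper exploits (e.g., the construction of level mappings in \yref{thm:level} from the intermediate iterates). Your two side remarks --- that $\mb I$ must be a $t$-model of $\P$ for $\Fp t\t(\mb I)$ to be defined and for $\Fp t(\,.\,,\mb I)$ to be a self-map of $[\0,\mb I]$, and that the head evaluation in $\Fp t(\mb K,\mb I)$ uses the first component $\mb K$ while only the body evaluation is weakened --- are both accurate and worth making explicit.
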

\begin{proof} We compute $\Fp t\t(\mb I)$ bottom-up. Clearly, $\mb K_0=\0\seq\mb I$. Since $\Fp t(\,.\,,\mb I)$ is monotone, we have
\begin{align*} \mb K_1=\Fp t(\0,\mb I)\seq\Fp t(\mb K,\mb K)=\Tp t(\mb K)\seq\mb K.
\end{align*} Similarly, we can compute $\mb K_2=\Fp t(\mb K_1,\mb I)\seq\mb K$ and so on, which shows that the limit $\Fp t\t(\mb I)$ is contained in $\mb K$, i.e., $\Fp t\t(\mb I)\seq\mb K$.
\end{proof}

We are now ready to prove the main result of this paper.

\begin{theorem}\label{thm:main} Every $\Fp t$-answer stream is a $t$-answer stream of $\P$.
\end{theorem}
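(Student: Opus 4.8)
The plan is to unfold both defining notions and reduce the claim to the fixed-point machinery already developed. By \yref{def:Fp-answer} a $\Fp t$-answer stream $\mb I$ is a $t$-model of $\P$ satisfying $\Fp t\t(\mb I)=\mb I$, while by \yref{def:answer} I must show that $\mb I$ is a substream minimal $t$-model of the reduct $\P\It$. The first half is immediate: since $\mb I$ is a $t$-model of $\P$ and every $\rho\in\P\It$ satisfies $\mb I,t\modG\mm B(\rho)$ by definition of the reduct, we get $\mb I,t\modG\mm H(\rho)$, so $\mb I$ is a $t$-model of $\P\It$ and $\mb D\seq\mb I$. The real content is therefore minimality: no proper substream of $\mb I$ is a $t$-model of $\P\It$.

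The key reduction is to recognize $\Fp t(\,.\,,\mb I)$ as the Fitting operator of the reduct program. For every $\mb K\seq\mb I$ I would apply \yref{lem:3-val} to the $3$-valued stream $(\mb K,\mb I)$, using the witness $\mb I\in[\mb K,\mb I]$, to obtain $\Phi_{\P\It,\mb D,t}(\mb K,\mb I)=\Fp t(\mb K,\mb I)$. Thus the two operators $\Phi_{\P\It,\mb D,t}(\,.\,,\mb I)$ and $\Fp t(\,.\,,\mb I)$ agree on the complete lattice $[\0,\mb I]$, so their least fixed points coincide, giving $\Phi_{\P\It,\mb D,t}\t(\mb I)=\Fp t\t(\mb I)=\mb I$, where the last equality is the hypothesis that $\mb I$ is a fixed point of $\Fp t\t$.

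For the minimality step I would take any $t$-model $\mb J$ of $\P\It$ with $\mb D\seq\mb J\seq\mb I$. Applying \yref{thm:Tp} to the program $\P\It$ shows $\mb J$ is a prefixed point of $\mm T_{\P\It,\mb D,t}$. Applying \yref{lem:lower} with $\P\It$ in place of $\P$ then yields $\Phi_{\P\It,\mb D,t}\t(\mb I)\seq\mb J$, that is, $\mb I\seq\mb J$, whence $\mb J=\mb I$. Together with the first paragraph this establishes that $\mb I$ is a substream minimal $t$-model of $\P\It$, i.e.\ a $t$-answer stream. (A self-contained variant avoids \yref{lem:3-val} altogether: show directly that such a $\mb J$ is a prefixed point of $\Fp t(\,.\,,\mb I)$ by noting that every rule firing on $(\mb J,\mb I)$ lies in $\P\It$ and has its head satisfied by $\mb J$, so the minimality clause of \yref{thm:MMt} gives $\mm M\mm M_{\mb J,t}$ of those heads $\seq\mb J$; then Knaster--Tarski forces $\mb I=\Fp t\t(\mb I)\seq\mb J$.)

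The main obstacle I anticipate is the operator identification in the second paragraph together with the observation that the auxiliary results \yref{thm:Tp} and \yref{lem:lower}, although stated for $\P$, transfer verbatim to the reduct $\P\It$ because their proofs rely only on monotonicity of the Fitting operator and on \yref{prop:Fp-Tp}. Once $\Fp t(\,.\,,\mb I)$ is seen to be exactly the Fitting operator of the reduct with frozen upper bound $\mb I$, the subtle interplay between $3$-valued body evaluation and $2$-valued reduct membership collapses, and FLP-style minimality follows cleanly from the least-fixed-point characterization.
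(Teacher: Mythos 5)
Your proposal is correct and takes essentially the same route as the paper's own proof: identify $\Fp t(\,.\,,\mb I)$ with $\Phi_{\P\It,\mb D,t}(\,.\,,\mb I)$ via \yref{lem:3-val}, conclude $\Phi_{\P\It,\mb D,t}\t(\mb I)=\mb I$, and then derive minimality by applying \yref{thm:Tp} and \yref{lem:lower} to the reduct $\P\It$. Your additional explicitness (the pointwise operator identification on $[\0,\mb I]$, the remark that the auxiliary results transfer to $\P\It$, and the self-contained Knaster--Tarski variant) only spells out what the paper leaves implicit.
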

\begin{proof} By assumption, we have $\Fp t\t(\mb I)=\mb I$ which implies 
\begin{align}\label{equ:main}\Phi_{\P\It,\mb D,t}\t(\mb I)=\mb I
\end{align} by \yref{lem:3-val}, that is, $\mb I$ is a $\Phi_{\P\It,\mb D,t}$-answer stream. Since every $\Phi_{\P\It,\mb D,t}$-answer stream is a $t$-model of $\P\It$, it remains to show that $\mb I$ is a {\em minimal} $t$-model of $\P\It$. For this suppose there exists some stream $\mb K$ with $\mb K\sneq\mb I$ such that $\mb K$ is a $t$-model of $\P\It$. Then, by \yref{thm:Tp}, we have $\mm T_{\P\It,\mb D,t}(\mb K)\seq\mb K\sneq\mb I$ which implies $$\Phi_{\P\It,\mb D,t}\t(\mb I)\seq\mb K\sneq\mb I$$ by \yref{lem:lower}---a contradiction to \yref{equ:main}.
\end{proof}

\yref{thm:main} shows that our fixed point semantics is sound with respect to our FLP-style semantics. However, the next example shows that the converse of \yref{thm:main} fails in general.

\begin{example}\label{exa:R2} Reconsider the program $\R$ of \yref{exa:R} consisting of the rules
\begin{align*} 
a&\lra{1}\Box b\\
b&\lra{2}\Box a.
\end{align*} In \yref{exa:R} we have seen that $\{a,b\}_t$ is a $t$-answer stream of $\R$, for every $t\geq 1$ (and $\mb D=\Gamma=\0$). On the other hand, we have $\Phi_{\R,t}(\0,\{a,b\}_t)=\0$ which shows that $\{a,b\}_t$ is {\em not} a $\Phi_{\R,t}$-answer stream.
\end{example}

\section{Level Mappings}\label{sec:level}

In this section, we define level mappings for stream logic programs in the vein of \cite{Shen14}, and prove in \yref{thm:level} that $\Fp t$-answer streams characterize those $t$-models which posses a level mapping or, equivalently, which are free of circular justifications.

Firstly, we recast the notion of a partitioning to stream logic programs.

\begin{definition} A {\em partitioning} of a stream $\mb I$ is a sequence of streams $\mb S=(\mb S_0,\mb S_1,\ldots,\mb S_m)$, $m\geq 1$, where $\mb S_0=\0$, $\mb S_1\cups\mb S_m=\mb I$, $\mb S_i\neq\0$ for every $i\geq 1$, and $\mb S_i\cap\mb S_j=\0$ for every $i\neq j\neq 0$.
\end{definition}

We now define level mappings over such partitionings.

\begin{definition}\label{def:level} A {\em $t$-level mapping} of a stream $\mb I$ with respect to $\P$ (for $\mb D$) is a partitioning $\mb S=(\mb S_0,\mb S_1,\ldots,\mb S_m)$ of $\mb I$ such that for all $1\leq i\leq m$,
\begin{align}\label{equ:S} \mb S_i\seq\mb D\cup\mm{MM}_{\mb S_1\cups\mb S_{i-1},t}(\{\mm H(\rho)\mid\rho\in\P:(\mb S_1\cups\mb S_{i-1},\mb I),t\modG\mm B(\rho)\}).
\end{align} We call $\mb S$ a {\em total} $t$-level mapping of $\mb I$ if in addition $\mb I=\mb S_0\cups\mb S_m$ is a $t$-model of $\P$.
\end{definition}

The intuition behind \yref{def:level} is as follows. A partitioning $\mb S=(\mb S_0,\mb S_1,\ldots,\mb S_m)$ with $\mb S_i=S_{i,1}S_{i,2}\ldots$, $1\leq i\leq m$, is a $t$-level mapping of $\mb I$ if each proposition $a\in S_{i,t_i}$ (i.e., $a$ holds in level $i$ at time point $t_i$) is non-circularly justified by the rules in $\P$, i.e., either $a\in D_{t_i}$ or there exists a rule $\rho$ in $\P$ justifying $a$ at time point $t_i$, that is, $a$ occurs in the head of $\rho$ and the body of $\rho$ ``fires'' in a level smaller than $i$. For $\mb S$ to be called total, we additionally require $\mb S_1\cups\mb S_m=\mb I$ to contain every proposition occurring in a rule head which is derivable from $\mb I$, i.e., $\mb D\cup\MMt(\mm H(\P\It))\seq\mb I$ which, by \yref{thm:Tp}, is equivalent to $\mb I$ being a $t$-model of $\P$. Clearly, a stream $\mb I$ possessing a $t$-level mapping is free of circular justifications.

Note that we can rewrite \yref{equ:S} more compactly as
\begin{align}\label{equ:Fp-S} \mb S_i\seq\Fp t(\mb S_1\cups\mb S_{i-1},\mb I)
\end{align} which shows the direct relationship between $t$-level mappings and the Fitting operator.

\begin{example}\label{exa:level} Once again, reconsider the program $\P$ of \yref{exa:running}. In \yref{exa:Fp} we have seen that $$\mb I=\{a\}_1\{a,b\}_3\{a,b,c\}_4\{a,b,c\}_5\{a,b,c\}_6\{a,b,c\}_7\{a,b,c\}_8\{c\}_9\{c\}_{10}$$ is a $\Fp 5$-answer stream for  $\mb D=\{a\}_1\{a,b\}_5\{c\}_{10}$. We construct the total 5-level mapping $\mb S=(\mb S_0,\mb S_1,\mb S_2,\mb S_3)$ of $\mb I$ for $\P$ as follows:{\footnote{By ``$\sm$'' we mean here the point-wise relative complement, e.g., $\{a\}_1\{b\}_2\sm\{b\}_2=\{a\}_1$.}}
\begin{align*} 
\mb S_0 &= \mb I_0 = \0\\
\mb S_1 &= \mb I_1\sm\mb I_0 = \mb D = \{a\}_1\{a,b\}_5\{c\}_{10}\\
\mb S_2 &= \mb I_2\sm\mb I_1 = \{c\}_4\{c\}_5\{c\}_6\{c\}_7\{c\}_8\{c\}_9\\
\mb S_3 &= \mb I_3\sm\mb I_2 = \{a,b\}_3\{a,b\}_4\{a,b\}_6\{a,b\}_7\{a,b\}_8
\end{align*} where $\mb I_0=\0,\mb I_1=\mb D,\mb I_2$, and $\mb I_3=\mb I$ are the intermediate results in the bottom-up computation of $\Fp 5\t(\mb I)$ (cf. \yref{exa:Fp}).
\end{example}

We can characterize the $\Fp t$-answer streams in terms of $t$-level mappings as follows.

\begin{theorem}\label{thm:level} A stream $\mb I$ is a $\Fp t$-answer stream if, and only if, there is a total $t$-level mapping $\mb S$ of $\mb I$ with respect to $\P$.
\end{theorem}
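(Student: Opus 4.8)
The plan is to prove both directions by exploiting the tight correspondence between the bottom-up iteration of $\Fp t\t$ and the successive levels of a $t$-level mapping, made explicit by the compact form \yref{equ:Fp-S}. The key observation driving the whole argument is that a total $t$-level mapping $\mb S=(\mb S_0,\ldots,\mb S_m)$ of $\mb I$ is essentially a record of an iterative construction of $\mb I$ in which each new layer $\mb S_i$ is contained in one application of $\Fp t(\,.\,,\mb I)$ to the accumulated streams $\mb S_1\cups\mb S_{i-1}$. Since $\Fp t(\,.\,,\mb I)$ is monotone on the complete lattice $[\0,\mb I]$ (as noted after \yref{prop:Fp-mon}), its least fixed point $\Fp t\t(\mb I)$ is the limit of the chain $\0,\Fp t(\0,\mb I),\Fp t^2(\0,\mb I),\ldots$, and I would compare this canonical chain with the partial unions $\mb U_i=\mb S_1\cups\mb S_i$ of an arbitrary level mapping.

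For the direction ``$\Fp t$-answer stream $\Rightarrow$ total $t$-level mapping,'' I would take the bottom-up computation of $\Fp t\t(\mb I)$ itself and read off a partitioning. Writing $\mb U_0=\0$ and $\mb U_{i}=\Fp t(\mb U_{i-1},\mb I)$ for the iterates, I would set $\mb S_i=\mb U_i\sm\mb U_{i-1}$ (point-wise relative complement), exactly as in \yref{exa:level}. Because $\mb I$ is a fixed point of $\Fp t\t$, the chain stabilizes at $\mb I$, so $\mb S_1\cups\mb S_m=\mb I$ and the $\mb S_i$ are pairwise disjoint; totality follows since $\mb I$ is a $t$-model of $\P$ by \yref{def:Fp-answer}. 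The defining inclusion \yref{equ:Fp-S} then holds essentially by construction, since $\mb U_{i}=\Fp t(\mb U_{i-1},\mb I)$ and $\mb U_{i-1}=\mb S_1\cups\mb S_{i-1}$, so $\mb S_i\seq\mb U_i=\Fp t(\mb S_1\cups\mb S_{i-1},\mb I)$. The only care needed is to verify that the iteration stabilizes in finitely many steps so that a genuine finite partitioning results, and to discard empty layers.

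For the converse, ``total $t$-level mapping $\Rightarrow$ $\Fp t$-answer stream,'' I would prove by induction on $i$ that $\mb U_i=\mb S_1\cups\mb S_i\seq\Fp t\t(\mb I)$. The base case is immediate, and in the inductive step I would apply \yref{equ:Fp-S} together with monotonicity of $\Fp t(\,.\,,\mb I)$: from $\mb U_{i-1}\seq\Fp t\t(\mb I)$ and $\mb S_i\seq\Fp t(\mb U_{i-1},\mb I)\seq\Fp t(\Fp t\t(\mb I),\mb I)=\Fp t\t(\mb I)$ (the last equality because $\Fp t\t(\mb I)$ is a fixed point of $\Fp t(\,.\,,\mb I)$), I conclude $\mb U_i\seq\Fp t\t(\mb I)$. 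Taking $i=m$ gives $\mb I\seq\Fp t\t(\mb I)$; the reverse inclusion $\Fp t\t(\mb I)\seq\mb I$ is automatic since $\Fp t\t(\mb I)=\lfp\Fp t(\,.\,,\mb I)\in[\0,\mb I]$. Hence $\mb I=\Fp t\t(\mb I)$, and since $\mb I$ is a $t$-model of $\P$ by totality, $\mb I$ is a $\Fp t$-answer stream.

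The main obstacle I anticipate is not the logical skeleton above but the bookkeeping connecting an \emph{arbitrary} level mapping to the \emph{canonical} least-fixed-point iteration. A level mapping only requires the inclusion \yref{equ:Fp-S}, not equality, and its layers need not coincide with the iterates $\mb U_i$ of $\Fp t\t$; so in the converse direction I must be careful that the induction uses only the one-sided inclusion and the fixed-point/monotonicity properties, never assuming the layers match the canonical chain. Dually, in the forward direction I must confirm that the disjointness and nonemptiness conditions of a partitioning are actually met by the differences of the iterates and that the stream-valued relative complement interacts correctly with the substream ordering $\seq$. These are the points where I would slow down and check the definitions against \yref{def:level} carefully.
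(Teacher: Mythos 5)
Your proof is correct, and the forward direction is exactly the paper's: read the levels off the bottom-up Kleene iteration of $\Fp t(\,.\,,\mb I)$, setting $\mb S_i=\mb I_i\sm\mb I_{i-1}$, with totality coming from $\mb I$ being a $t$-model. Your converse, however, is organized differently from the paper's. The paper first establishes that $\mb I$ is a fixed point of $\Fp t(\,.\,,\mb I)$ (via \yref{equ:Fp-S} at level $m$, monotonicity, \yref{prop:Fp-Tp} and \yref{thm:Tp}), and then proves \emph{leastness} by contradiction: it supposes a fixed point $\mb K\sneq\mb I$, positions $\mb K$ relative to the partial unions $\mb S_1\cups\mb S_{i-1}$, and propagates every level into $\mb K$ via \yref{equ:Fp-S} and \yref{prop:Fp-mon}, contradicting $\bigcup\mb S=\mb I$. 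You instead induct the levels directly into the least fixed point itself: $\mb S_i\seq\Fp t(\mb S_1\cups\mb S_{i-1},\mb I)\seq\Fp t(\lfp\Fp t(\,.\,,\mb I),\mb I)=\lfp\Fp t(\,.\,,\mb I)$, giving $\mb I\seq\Fp t\t(\mb I)\seq\mb I$, so the fixed-point property of $\mb I$ falls out for free rather than being proved separately. The two arguments use the same two ingredients (\yref{equ:Fp-S} and \yref{prop:Fp-mon}, plus Knaster--Tarski on $[\0,\mb I]$), and are essentially mirror images; what yours buys is a shorter, contradiction-free derivation that also avoids the paper's delicate intermediate claim that a proper fixed point $\mb K\sneq\mb I$ can be sandwiched as $\mb S_1\cups\mb S_{i-1}\seq\mb K\seq\mb S_1\cups\mb S_i$ (your induction needs no such positioning). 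A further small merit: you explicitly flag that the forward construction must stabilize in finitely many steps and that empty layers must be discarded to meet the partitioning conditions of \yref{def:level}, points the paper passes over silently; these are harmless here because $\Sigma$ is finite and streams have bounded support, so the chain in $[\0,\mb I]$ is finite and strictly increasing until it stabilizes.
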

\begin{proof} For the direction from left to right, we construct the total $t$-level mapping $\mb S$ of the $\Fp t$-answer stream $\mb I$ as in \yref{exa:level}. Let $\mb I_0=\0,\mb I_1,\ldots,\mb I_m=\mb I$ be the intermediate results of the bottom-up computation of $\Fp t\t(\mb I)=\mb I$, i.e., $$\Fp t(\mb I_{i-1},\mb I)=\mb I_i,\quad 1\leq i\leq m,$$ and define $\mb S_0=\0$ and $\mb S_i=\mb I_i\sm\mb I_{i-1}$, for all $1\leq i\leq m$. By construction, we have $\mb I_i=\mb S_1\cups\mb S_i$, for all $1\leq i\leq m$, which directly yields the inclusion in \yref{equ:Fp-S}; moreover, since $\mb I$ is a $t$-model of $\P$, $\mb S$ is a total $t$-level mapping of $\mb I$ with respect to $\P$.

For the opposite direction, let $\mb S=(\mb S_0,\mb S_1,\ldots,\mb S_m)$, $m\geq 1$, be a total $t$-level mapping of $\mb I$ with respect to $\P$. We need to show that $\mb I=\bigcup\mb S$, with $\bigcup\mb S=\mb S_1\cups\mb S_m$, is a fixed point of $\Fp t\t$. Since $\mb S$ is total, $\mb I$ is a $t$-model of $\P$, so we have by \yref{equ:Fp-S}, the monotonicity of $\Fp t$ (cf. \yref{prop:Fp-mon}), \yref{prop:Fp-Tp}, and \yref{thm:Tp}:
\begin{align*} 
\mb I=\bigcup\mb S\seq\Fp t(\mb S_1\cups\mb S_{m-1},\mb I)\seq\Fp t\left(\bigcup\mb S,\mb I\right)=\Fp t(\mb I,\mb I)=\Tp t(\mb I)\seq\mb I.
\end{align*} So $\mb I$ is a fixed point of $\Fp t(\,.\,,\mb I)$ and it remains to show that there is no fixed point $\mb K\sneq\mb I$ of $\Fp t(\,.\,,\mb I)$. Suppose, towards a contradiction, that for some $\mb K\sneq\mb I$,
\begin{align} \Fp t(\mb K,\mb I)=\mb K.
\end{align} Since $\mb K\sneq\mb I$ there is some $i$, $1\leq i\leq m$, such that $\mb S_1\cups\mb S_{i-1}\seq\mb K\seq\mb S_1\cups\mb S_i$. So by \yref{equ:Fp-S} and \yref{prop:Fp-mon} we have
\begin{align*} \mb S_i\seq\Fp t(\mb S_1\cups\mb S_{i-1},\mb I)\seq\Fp t(\mb K,\mb I)=\mb K.
\end{align*} Consequently,
\begin{align*}
\mb K
  &=\Fp t(\mb K,\mb I)\\
  &\seq\Fp t(\mb S_1\cups\mb S_i,\mb I)\\
  &\seq\Fp t(\mb S_1\cups\mb S_{i-1}\cup\mb K,\mb I)\\
  &=\Fp t(\mb K,\mb I)\\
  &=\mb K,
\end{align*} which implies
\begin{align}\label{equ:Sb_i} \Fp t(\mb S_1\cups\mb S_i,\mb I)=\mb K.
\end{align} From \yref{equ:Fp-S}, \yref{equ:Sb_i}, and the monotonicity of $\Fp t$ (cf. \yref{prop:Fp-mon}) we infer
\begin{align*}
\mb S_{i-1}&\seq\Fp t(\mb S_1\cups\mb S_{i-2},\mb I)\seq\Fp t(\mb S_1\cups\mb S_i,\mb I)=\mb K;\\
\mb S_{i+1}&\seq\Fp t(\mb S_1\cups\mb S_i,\mb I)=\mb K.
\end{align*} Hence, $\mb S_j\seq\mb K$ for all $1\leq j\leq m$, and so $\bigcup\mb S\seq\mb K\sneq\mb I$---a contradiction to $\bigcup\mb S=\mb I$.
\end{proof}

\begin{example}\label{exa:level2} Reconsider the program $\R$ of \yref{exa:R} consisting of the following two rules:
\begin{align*} 
a&\la\Box b\\
b&\la\Box a.
\end{align*} In \yref{exa:R} we have seen that for every $t\geq 1$, $\mb I=\{a,b\}_t$ is a $t$-answer stream of $\R$. Note that $a$ and $b$ are circularly justified in $\R$. As $\mb I$ is {\em not} a $\Fp t$-answer stream (cf. \yref{exa:R2}), there is no total $t$-level mapping of $\mb I$ by \yref{thm:level}. 
\end{example}


Note that \yref{thm:level} together with \yref{thm:main} (and \yref{exa:R2}) characterize our semantics as the strict constructive subclass of our FLP-style semantics.

\section{Conclusion}\label{sec:Conc}

This paper contributed to the foundations of stream reasoning \citep{Valle09,Mileo17,DellAglio17} by providing a sound and constructive extension of the answer set semantics from ordinary to stream logic programs. For this we refined the FLP-style semantics of \cite{Beck18}. Moreover, we extended the van Emden-Kowalski and Fitting operators from ordinary to stream logic programs. As a result of our investigations, we obtained constructive semantics of stream logic programs with nice properties. More precisely, it turned out that our fixed point semantics can be characterized in terms of level mappings or, equivalently, is free of circular justifications, which is regarded as a positive feature. Moreover, the algebraic nature of our fixed point semantics yields computational proofs which are satisfactory from a mathematical point of view.

As our fixed point semantics hinges on the (extended) Fitting operator, it can be reformulated within the algebraic framework of Approximation Fixed Point Theory (AFT) \citep{Denecker04,Denecker12}, which is grounded in the work of Fitting on bilattices in logic programming (cf. \cite{Fitting02}), and which captures a number of related (non-monotonic) formalisms (e.g., \cite{Denecker03,Antic13}). In the future, we wish to apply the full framework of AFT to LARS, which provides a well-founded semantics \citep{vanGelder91}, a notion of strong and uniform equivalence \citep{Truszczynski06}, a bottom-up semantics for disjunctive programs \citep{Antic13}, and a recently introduced algebraic notion of groundedness \citep{Bogaerts15}. 


\section*{Acknowledgments}

We would like to thank the reviewers for their thoughtful and constructive comments, and for their helpful suggestions to improve the presentation of the article.

\bibliographystyle{elsarticle-harv}\biboptions{authoryear}
\bibliography{2020-01-08-fixstream}
\end{document}